\newtheorem{thm}{Theorem}[section]
\newtheorem{pro}[thm]{Proposition}
\newtheorem{rem}[thm]{Remark}
\newcommand{\expectation}[1]{\mathbb{E}\left[#1\right]}
\definecolor{lightgreen}{rgb}{0.94, 1.0, 0.94}
\definecolor{forestgreen}{rgb}{0.13, 0.55, 0.13}
\definecolor{violet}{rgb}{0.5, 0.0, 0.5}
\definecolor{pink}{rgb}{1.0, 0.94, 0.96}
\title{Sideward contact tracing in an epidemic
model with mixing groups}
\author[1*]{Dongni Zhang}
\author[2]{Martina Favero}
\affil[1]{Department of Health, Medicine and Caring Sciences, Link{\"o}ping University, 581 83 Link{\"o}ping, Sweden.}
\affil[2]{Department of Mathematics, Stockholm University, 106 91 Stockholm, Sweden.}
\affil[*]{Corresponding Author: dongni.zhang@liu.se}
\date{\today}
\begin{document}

\maketitle

\begin{abstract} 
We consider a stochastic epidemic model with sideward contact tracing. We assume that infection is driven by interactions within mixing events (gatherings of two or more individuals). 
Once an infective is diagnosed, each individual who was infected at the same event as the diagnosed individual is contact traced with some given probability. Assuming few initial infectives in a large population, the early phase of the epidemic is approximated by a branching process with sibling dependencies. To address the challenges given by the dependencies, we consider sibling groups (individuals who become infected at the same event) as macro-individuals and define a macro-branching process.  
This allows us to derive an expression for the effective macro-reproduction number which corresponds to the effective individual reproduction number and represents a threshold for the behaviour of the epidemic.
Through numerical examples, we show how the reproduction number varies with the distribution of the mixing event size, the mean size, the rate of diagnosis and the tracing probability. 
\end{abstract}

\section{Introduction}\label{sec:intro}

Contact tracing is recognized as a crucial prevention for pandemic control. Traditional contact tracing involves \textit{backward} tracing to identify who may have infected the index case and \textit{forward} tracing to find out who the index case may have infected (individuals who are diagnosed are called index cases in the contact tracing). This paper concerns an innovative tracing method known as \textit{“sideward” contact tracing}, first introduced and analysed in \cite{mancastroppa2022sideward} within the framework of simplicial activity-driven temporal networks.  
We focus on a related sideward tracing strategy which identifies individuals attending the same gathering as the index case. This mechanism is particularly relevant in the context of large gatherings where ``superspreading events" (many people get infected at once \cite{lewis2021superspreading}) usually occur. 

While \cite{mancastroppa2022sideward} makes a  significant step in the analysis of the effects of sideward tracing by focusing on an epidemic model spreading on a temporal network, we instead focus on a different framework arising from classical stochastic SIR  models (see, e.g., \cite{britton2010stochastic} for an overview). In particular, we adopt the stochastic epidemic model with mixing groups introduced in \cite{ball2022epidemic}. In most epidemic models, infections are assumed to occur between pairs of individuals (one is infectious, and the other is susceptible). In this model, instead of having pairwise interactions, individuals make contacts via attending \textit{mixing events} with at least two individuals involved. This allows for the possibility that an infective can infect more than one susceptible on a single occasion and thus provides a suitable framework for the study of sideward tracing. 
% \martina{(check that you like the paragraph above)}
% \dongni{I think this looks good. Perhaps we can cite the Tom's paper, the survey one, after the "classic SIR model"?}

The mathematical modelling of contact tracing presents several challenges (see an overview in \cite{muller2021contact}). This paper focuses on analysing the initial behaviour of the epidemic with sideward tracing via a continuous-time branching process incorporating \textit{sibling dependencies}. In this context, an individual in the branching process corresponds to an infectious individual in the epidemic setting, while their offspring represent the secondary cases they infect and siblings are those who are infected by the same individual. Notably, sibling dependencies are not limited to epidemic settings. They are also highly relevant in biological population models, where they play a key role in understanding population dynamics \cite{olofsson1997branching}. A common example is sibling competition, which arises when siblings share limited resources \cite{roulin2012sibling}, leading to a negative correlation in the number of offspring each sibling can successfully produce.

Generally speaking, the dependencies introduced by contact tracing complicate the analysis of related branching processes. For instance, in previous studies of models with forward and/or backward contact tracing, lifetimes of siblings are co-dependent on their parents in \cite{ball2011threshold}, lifetimes of siblings and parents depend on each other in \cite{zhang2022analysing}. In this paper, the challenge arises from the lifetimes of individuals who were infected in the same mixing event being dependent on each other due to sideward contact tracing. 

To address this challenge, we consider the groups of siblings, specifically those who were born (infected) in the same birth (mixing) event, as \textit{macro-individuals},  such macro-individual behaving independently of each other. The main idea, introduced by \cite{olofsson1997branching,broberg1987sibling}, is to consider these macro-individuals as a branching process, referred to as the ``macro process", embedded into the sibling-dependent process. The key point is that macro-individuals reproduce independently, unlike single individuals, as all the dependencies are within the sibling groups. However, our definition of sibling groups differs: the sibling group in \cite{olofsson1997branching} consists of all the children with the same parent, whereas our sibling group comprises children born during the same birth event, so an individual can give birth to multiple sibling groups.

We study the macro process, which is a branching process with birth rates related to sizes of sibling groups, and we apply the standard branching process theory to derive a macro-reproduction number which represents  the epidemic threshold and corresponds to the individual reproduction number. 

The paper is structured as follows. In Section \ref{sec:model}, the epidemic model with mixing groups is defined, as in \cite{ball2022epidemic}, and a sideward contact tracing mechanism is introduced in the epidemic model. The approximation of the early stage of the epidemic, with and without contact tracing, is presented in Section \ref{sec:early_epi_approx}. In Section \ref{sec:macro_process}, the limiting process for the early epidemic with contact tracing is described as a macro branching process, which allows the derivation of macro and individual reproduction numbers in Section \ref{sec:results}. A numerical illustration of the effect of sideward tracing on the reproduction number is provided in Section \ref{sec:numerical}. Finally, Section \ref{sec:conclusion} presents the conclusion and discussion.

\section{Model}\label{sec:model}
\subsection{The epidemic model with mixing groups}\label{sec:base_model}

We describe the SIR epidemic model recently introduced and analysed in \cite{ball2022epidemic,ball2023strong,cortez2024}. Initially, there are $m_{n}$ number of infectives in an otherwise susceptible population of fixed size $n$. Differently from the standard homogeneous SIR epidemic where infections occur through pairwise interactions, infectious contacts in this model are made during temporary gatherings, referred to as ``mixing events'', which  occur at rate $n\beta$. Each mixing event involves a certain number, distributed as $C^{(n)}$ and independent of other mixing events, of individuals chosen uniformly at random from the population. The mean size of mixing events is denoted by $\mu^{(n)}_{C} = \expectation{C^{(n)}}$. Naturally, it is assumed that a mixing event involves at least two individuals and at most $n$ individuals. 

Given that a mixing event is of size $c$, i.e. it involves $c$ individuals, any given individual in the population attends the event with probability $c/n$ and any given infectious individual attending the event has probability $\pi_{c}$ of making an infectious contact with any given susceptible individual attending the event, where all such contacts occur independently. The dependence of $\pi_{c}$ on $c$ is designed to capture how event size influences the likelihood of transmission in real-world scenarios. For instance, individuals in larger gatherings may adopt more cautious behaviours, such as maintaining physical distance, which can lower the transmission risk. Conversely, smaller events tend to foster closer interactions, such as prolonged conversations, which can be more conducive to transmission.

Any susceptible individual who is contacted by at least one infective during the event becomes infected and remains infectious for a random period of time $T_{I} \sim Exp(\gamma)$ (with mean $\expectation{T_{I}}=1/\gamma$). 
Further, it is assumed that  newly infected individuals cannot transmit the disease within the same event where they were infected, i.e. mixing events are treated as instantaneous, and that there is no latency period. All the processes and random variables described above are assumed to be mutually independent.

Note that it is possible to have zero or multiple infections in one single mixing event, and that, when all mixing events involve exactly two individuals, i.e.  $C^{(n)}\equiv2$,  the above model aligns with standard homogeneously mixing SIR epidemic model with individual-to-individual rate of infection $2\beta\pi_{2}/(n-1)$ and recovery rate $\gamma$.

\subsection{Sideward contact tracing}\label{sec:model_ct}

This subsection aims to incorporate sideward contact tracing into the epidemic model described above. To this aim, we introduce a diagnosis rate $\delta$, which activates a contact tracing procedure. More precisely, infectives can be removed in three ways: 
due to natural recovery at rate $\gamma$; due to diagnosis (excluding contact tracing) at rate $\delta$, or through a contact tracing mechanism described in the following. 
For clarity, the term ``diagnosed" is used broadly here to include various causes of removal that can initiate contact tracing.  For example, an individual can be diagnosed due to the onset of symptoms or mass testing.  

We assume that when an infectious individual is diagnosed, they are removed (isolated) and the mixing event where they were infected is immediately \textit{detected}, triggering the sideward contact tracing procedure. Such individual is referred to as the \textit{index case}. Ideally, contact tracing would identify all of the events involving the index case and individuals present at the detected events, including those who infected the index case (backward tracing) and those they subsequently infected (forward tracing). However, such a comprehensive hybrid approach introduces additional dependencies between infectors and infectees, making the analysis significantly more challenging. In this paper, we focus exclusively on the novel mechanism of sideward tracing, not only because of its mathematical tractability, but also to better analyse its effect separately from the effect of the better-known mechanisms of backward and forward tracing. 
 Specifically, we assume that sideward tracing identifies only the event where the index case was infected and the individuals present at that event, while \textit{excluding} both the infector(s) and infectees.

This assumption in principle includes both individuals infected at the same event as the index case and those infected at distinct events but present at the detected event. However, in the initial phase of the epidemic, the latter type of individuals is not present because mixing events that involve at least one infective include, almost surely,  only one infective with all others being susceptible (see Section \ref{sec:early_epi_approx}). Our analysis concerns the initial phase of the epidemic, consequently, we can safely simplify the sideward tracing mechanism to tracing only those infected at the same event as the index case, while neglecting those present but infected elsewhere. 

For mathematical formulation, we define the sideward tracing procedure as follows: each individual who is infected at the same event as the index case is traced, independently with probability $p$, and tested. If found infectious, they are immediately removed to stop spreading the infection. Table \ref{tab:model_parameter} lists the important model parameters. 

Note that the epidemic model with diagnosis rate $\delta$, but without contact tracing, i.e. $p=0$, corresponds to the epidemic model defined in Section \ref{sec:base_model}, where the infectious period follows $T_{I} \sim Exp(\gamma+\delta)$. In addition, if all the mixing events involve exactly two individuals, sideward tracing has no effect, since no one will be contact traced. 

\begin{table}[htb!]
    \centering
        \caption{Key quantities related to the epidemic model}
         \label{tab:model_parameter}
         \renewcommand\arraystretch{1.2}
    \begin{tabular}{|l|l|}
    \hline
       Parameter  &  Description\\
       \hline
       $n$  & population size\\
       \hline
       $n\beta$  & rate of mixing events\\
       \hline
       $\gamma$  & rate of natural recovery\\
       \hline
       $\delta$  & rate of diagnosis\\
       \hline
       $p$  & probability of being contact traced in a detected event\\
       \hline
       $\pi_{c}$ & infection probability within a mixing event of size $c$ \\ 
       \hline
       $C^{(n)},\mu^{(n)}_{C}$ & size of a mixing event and its expected value \\
       % \hline
       % $\Tilde{C}$ & size of a mixing event involving a typical infective (large $n$)\\
       \hline
    \end{tabular}
\end{table}
\noindent

\section{Early epidemic approximation}\label{sec:early_epi_approx}

In this section, first, we explain heuristically how the early stages of the epidemic involving mixing groups can be approximated by a branching process; see \cite{ball2023strong} for a more detailed explanation and rigorous proof. Furthermore, we describe how sideward contact tracing modifies the branching process by modifying the lifespan of individuals and by introducing dependencies between siblings who are born in the same birth event (corresponding to individuals who are infected during the same mixing event). The dependencies lead to challenges in the analysis of the reproduction number, which are highlighted here and addressed in the next section.  

We consider the early phase of an epidemic with mixing groups in a large population of size $n$ with a few initial infectives, i.e.   $m_{n}=m$ for all sufficiently large $n$. 
We make asymptotic assumptions that the events size $C^{(n)}$ converges in distribution to $C$, as $n \to \infty$, where $C$ has probability distribution $p_{C}(c):=\mathbb{P}(C=c), c =2,3,...$, with finite mean $\mu_{C}$ satisfying $\mu^{(n)}_{C} \to \mu_{C}$, as $n \to \infty$.

Since mixing events are formed by choosing individuals uniformly at random from the large population and we focus on the beginning of an epidemic, with a probability close to $1$,  each mixing event that involves at least one infective consists of only one infective and all others being susceptible. Further, note that an event of size $c$ includes a given typical infective with probability $c/n$ and recall that mixing events occur at rate $n\beta$. Consequently, mixing events involving one typical infective occur at rate 
$$\sum_{c=2}^{\infty}n\beta \frac{c}{n} \mathbb{P}(C^{(n)}=c)= \beta \mu^{(n)}_{C} \to\beta \mu_C, \quad \text{as } n\to\infty.$$
In addition,  the size of a mixing group involving a typical infective is the size-biased version of $C^{(n)}$, 
%$$\frac{c\mathbb{P}(C^{(n)}=c)}{\mu^{(n)}_{C}}, \qquad (c=2,3,...,n),$$
which converges to the size-biased version of $C$,
denoted by $\Tilde{C}$, with probability distribution
$$p_{\Tilde{C}}(c):=\mathbb{P}(\Tilde{C}=c)=\frac{c p_{C}(c)}{\mu_{C}}, \qquad (c=2,3,...).$$

Under the above asymptotic assumptions and additional integrability conditions, Theorem 3.1 in \cite{ball2023strong} proves that, as the population size $n \to \infty$, the number of infectives in the early stages of the epidemic with mixing groups, without considering sideward contact tracing, is approximated by a branching process $\mathcal{B}$ which we describe in the following. 

There are $m$ ancestors in the branching process $\mathcal{B}$. Alive individuals in $\mathcal{B}$ correspond to infectious individuals and a \textit{birth event} corresponds to a mixing event containing one single infective in an otherwise susceptible group in the epidemic. Once born, an individual has lifetime distribution $T_{I} \sim Exp(\gamma)$, during which they give birth at rate $\beta \mu_{C}$. It follows that the number of birth events produced by one typical individual during their lifetime is geometrically distributed as $G$, with 
\begin{equation}
\label{eq:G_dist}
	\mathbb{P}(G=k)=\left(\frac{\beta\mu_{C}}{\beta\mu_{C}+\gamma}\right)^k \frac{\gamma}{\beta\mu_{C}+\gamma}, \qquad (k=0,1,...)
\end{equation}
which has mean 
\begin{equation}\label{eq:EG}
	\expectation{G}= \frac{\beta\mu_{C}}{\gamma}.
\end{equation}

Denote by $\Tilde{Z}_{i}, i=1,\dots, G,$ the number of offspring produced at $i-$th birth event, which are i.i.d. random variables, independent of $G$ and equal in distribution to $\tilde{Z}$, described in the following. Given that the size of a mixing group is equal to $c$ which happens with probability $p_{\Tilde{C}}(c)$,  there are $c-1$ susceptibles in the group, each infected with probability $\pi_{c}$ independently. Thus, the number of individuals infected at an event of size $c$, the number of offspring $\Tilde{Z}|\Tilde{C}=c$ in a birth event of size $c$,  follows a binomial distribution $Bin(c-1,\pi_{c})$. Consequently, $\Tilde{Z}$ follows a mixed-binomial distribution, that is, 
	\begin{equation*}
	\Tilde{Z} \sim MixBin(\Tilde{C}-1,\pi_{\Tilde{C}}).
	\end{equation*}
Note that 
\begin{equation}\label{eq:E_Z_tilde}
    \expectation{\Tilde{Z}}=\expectation{\expectation{\Tilde{Z}|\Tilde{C}}}=\expectation{(\Tilde{C}-1)\pi_{\Tilde{C}}}= \sum_{c=2}^{\infty}(c-1)\pi_{c} p_{\Tilde{C}}(c).
\end{equation}
Finally, the total number of offspring produced by one typical individual during their lifetime is given by
\begin{equation*}
    \sum _{i=1} ^{G} \Tilde{Z}_{i},
\end{equation*}
%with $\Tilde{Z}_{i} \overset{i.i.d.}{\sim} \Tilde{Z}$ with $\Tilde{Z} \sim Bin(\Tilde{C}-1,\pi_{\Tilde{C}})$, and $G$ is independent of all $\Tilde{Z}_{i}$.
and the \textit{basic reproduction number} in the epidemic with mixing groups, without contact tracing, corresponds to the mean number of offspring in the limiting branching process, i.e. 
\begin{equation}\label{eq:R0_general}
    R_{0}= \expectation{G} \expectation{\Tilde{Z}}
    =\frac{\beta\mu_{C}}{\gamma}\expectation{(\Tilde{C}-1)\pi_{\Tilde{C}}}= \frac{\beta}{\gamma} \sum_{c=2}^{\infty}c(c-1)\pi_{c} p_{C}(c),
\end{equation}
using  Equation (\ref{eq:EG}) and (\ref{eq:E_Z_tilde}).
     % &= \frac{\beta\mu_{C}}{\gamma} \sum_{c=2}^{\infty}\expectation[\Tilde{Z}|\Tilde{C}=c] p_{\Tilde{C}}(c)\\
    % &= \frac{\beta\mu_{C}}{\gamma} \sum_{c=2}^{\infty}(c-1)\pi_{c} p_{\Tilde{C}}(c)\\
    % &= \frac{\beta}{\gamma} \sum_{c=2}^{\infty}c(c-1)\pi_{c} p_{C}(c).
%\begin{rem}
%if we assume the infection probability $\pi_{c}= \pi$ for all $c$, it follows that
%\begin{equation}\label{eq:R0_constant_pi}
%    R_{0}=\frac{\beta\pi}{\gamma}\expectation[C(C-1)].
%\end{equation}
A major outbreak in the epidemic is associated to the non-extinction of the approximating branching process $\mathcal{B}$ and occurs with non-zero probability if and only if $R_0>1$ \cite{ball2023strong}.

If we consider the epidemic with diagnosis rate $\delta$ but without contact tracing, the branching process remains the same except for the lifetime of individuals (corresponding to the infectious period) $T_{I} \sim Exp(\gamma+\delta)$). Thus, in this case  $R_{0}$ is given by
\begin{equation}\label{eq:R0_diagnosis}
    R_{0}= \frac{\beta\mu_{C}}{\gamma+\delta}\expectation{(\Tilde{C}-1)\pi_{\Tilde{C}}}=\frac{\beta}{\gamma+\delta}\expectation{C(C-1)\pi_{C}}.
\end{equation}
%\end{rem}

%\subsection{Approximation when with contact tracing}\label{sec:approx_with_ct}

When sideward tracing is introduced, 
%the early epidemic approximation is not straightforward. Essentially, 
individuals who were infected during the same mixing event (\textit{siblings}) depend on each other since their infectious periods can be shortened if one of the others is diagnosed.
Consequently, as $n \to \infty$, the initial phase of the epidemic with sideward tracing is approximated by a different process $\mathcal{B}_{CT}$, which corresponds to the branching process $\mathcal{B}$ described above with modified lifespan distributions and  \textit{sibling dependencies}. 
%\dongni{I have read a bit carefully about the paper and one reference therein. It is shown that the asymptotics of the sibling dependent process can be obtained by instead studying the macro process in terms of sibling groups. Something about they have same growth rate...their macro process is multi-type, labelled by the sibling group size.}
The process $\mathcal{B}_{CT}$ is described as follows. 
%Following the similar argument in Section \ref{sec:approx_sir_mixing}, 
As in $\mathcal{B}$, each alive individual in $\mathcal{B}_{CT}$ gives birth events at rate $\beta\mu_{C}$, a birth event in $\mathcal{B}_{CT}$ corresponds to a mixing event consisting of one infective in an otherwise susceptible group in the epidemic, 
%Remaining the same asymptomatic assumptions as in Section   \ref{sec:approx_sir_mixing}, 
and the number of offspring produced at each birth event are i.i.d. as $\tilde{Z} \sim MixBin(\Tilde{C}-1,\pi_{\Tilde{C}})$. However,  lifespan distributions in $\mathcal{B}_{CT}$ differ from those in $\mathcal{B}$.

Consider one typical individual in $\mathcal{B}_{CT}$ and their siblings, the individuals who were born in the same birth event.  Note that individuals who were born from the same parent in other birth events are \textit{not} called siblings here.
The individual dies due to either of the following events happening: natural death (natural recovery in the epidemic) at rate $\gamma$; removal by diagnosis at rate $\delta$; or given that $(i-1)$ of their siblings are currently alive, one of their siblings is removed due to diagnosis and tracing is successful which happens at rate $(i-1)\delta p$. 

The number  $G_{CT}$ of birth events produced by the typical individual in $\mathcal{B}_{CT}$ has a distribution which is not straightforward to compute, unlike the above $G$ without contact tracing. Consequently, when considering sideward contact tracing, the reproduction number becomes challenging to compute, despite the simple expression 
    \begin{equation}\label{eq:R_ind_1}
    R^{(ind)}_{e} = \expectation{\sum_{j=1}^{G_{CT}} \Tilde{Z}_{j}}
    %\expectation{R_{CT}}=
    % \expectation{\sum_{j=1}^{G_{CT}} \Tilde{Z}_{j}} 
    = \expectation{G_{CT}}\expectation{ \tilde{Z}}. 
    \end{equation}
Table \ref{tab:quantities_br_process} lists the important quantities related to the process $\mathcal{B}$ and $\mathcal{B}_{CT}$.

Furthermore, because of dependencies between individuals, it is not obvious a priori, whether the reproduction number above represents a threshold for the behaviour of $\mathcal{B}_{CT}$ and of the epidemic (we will show in Section \ref{sec:R_ind} that it actually does). To address these challenges and to analyse the threshold behaviour of the epidemic, in the next section, we construct a macro branching process, embedded in $\mathcal{B}_{CT}$, by considering sibling groups as macro-individuals, which are independent.
%The key observation is that, while individuals in $\mathcal{B}_{CT}$ reproduce dependently, sibling groups are independent of each other.  

\begin{table}[htb!]
    \centering
        \caption{Key quantities related to the limiting branching processes (large population)}
        \label{tab:quantities_br_process}
\renewcommand\arraystretch{1.2}
    \begin{tabular}{|l|l|}
    \hline
    Notation  &  Description\\
    \hline
       $\mathcal{B}$
       & approximating branching process for epidemic without \\
       &contact tracing\\
        \hline
         $C,\mu_C$ & limiting size of a mixing event and its expected value\\
       \hline
       $\Tilde{C}$ & size of a birth event produced by a typical individual\\
       \hline
        $\beta\mu_C$& rate of birth events\\
        %&(rate at which a  typical infective attends a mixing event) \\
        \hline
        $\Tilde{Z}$& number of offspring in a typical birth event \\
        %&(newly infected in a mixing event involving a typical infective)\\
        \hline
        $G$ & number of birth events by a typical individual 
        in $\mathcal{B}$ \\
        \hline
        %(attended mixing events) 
  $\mathcal{B}_{CT}$
       & approximating branching process with sibling dependencies \\
       &for epidemic 
       with contact tracing\\
        \hline
        $ G_{CT}$& number of birth events by a typical individual in $\mathcal{B}_{CT}$ \\
    %    \hline\hline
    %    $\mathcal{M}$ & macro branching process\\
    %    \hline
    %    $\{Y(t)\}_{t\geq 0}$, $\{Y_k\}_{k\in\mathbb{N}}$ &  
    % process of sibling group size and its jump chain 
    % \\ \hline
    % $Y_0, \Tilde{Y}_0$& initial size of a sibling group and its size-biased version 
    % \\ \hline
    % $\lambda(t)=\beta\mu_{C}Y(t)$ &  birth rate by one macro-individual at age $t$ 
    % \\ \hline
    % $H$& macro-individuals generated by a typical macro-individual
    % \\ \hline
    % $N$& number of jumps it takes for $\{Y_k\}_{k\in\mathbb{N}}$ to reach $0$\\
     \hline
    \end{tabular}
\end{table}

\section{The macro branching process}\label{sec:macro_process}
%In this section, we describe the limiting process $\mathcal{B}_{CT}$ in terms of sibling groups. 
In the limiting process $\mathcal{B}_{CT}$, offspring born at the same birth event constitute a \textit{sibling group}. While dependency is within each sibling group, sibling groups are independent of their parents and, since each group is produced at independent birth events, sibling groups are independent of each other. 
By identifying sibling groups as ``macro-individual'', we define the \textit{macro process} $\mathcal{M}$ which is a branching process in terms of independent (macro) individuals. Figure \ref{fig:group_sibling} shows an example how the individual process and macro process are linked.  

\begin{figure}[htb!]
    \centering
    \includegraphics[width=.6\textwidth]{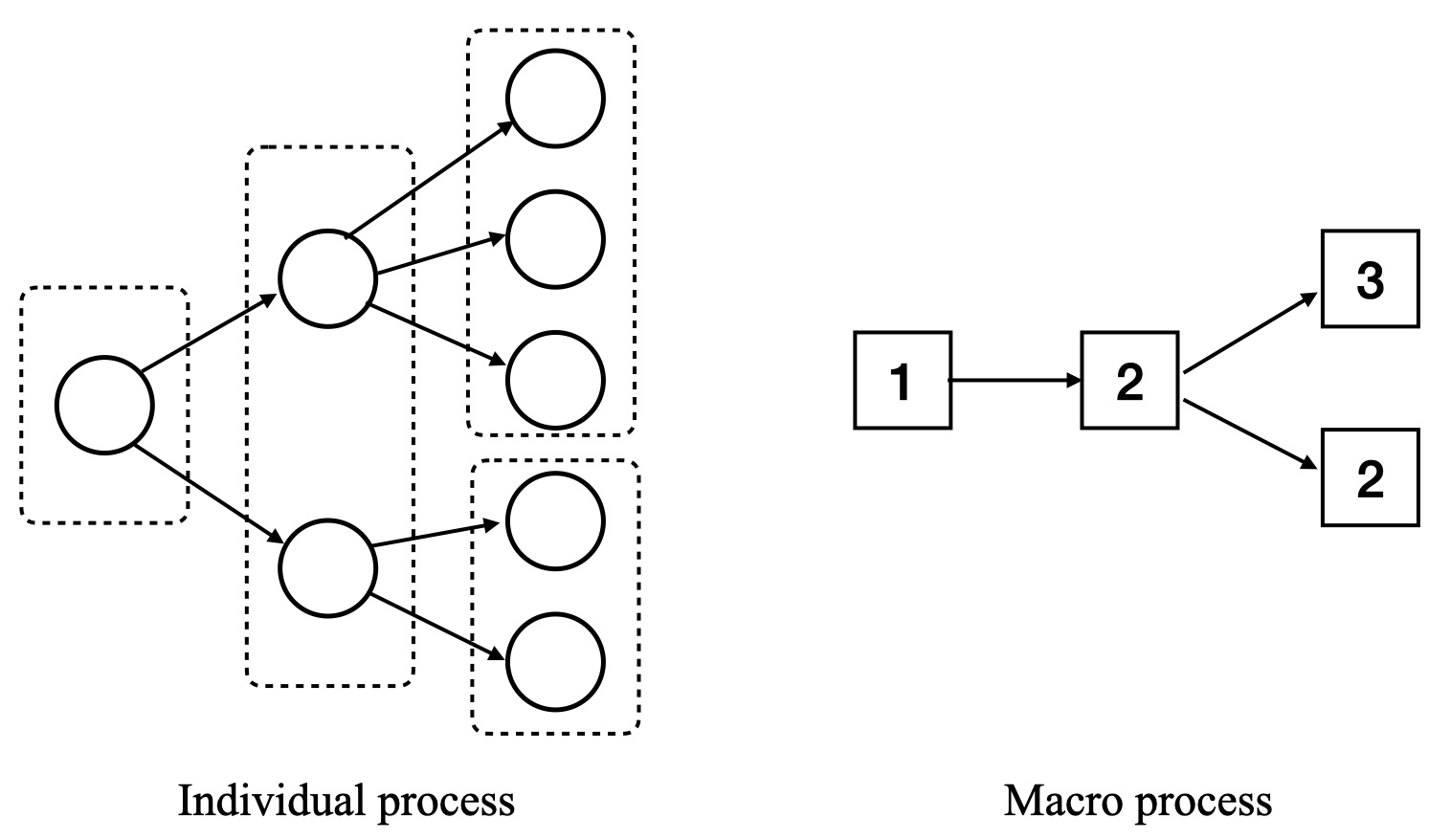}
    \caption{Illustration of an individual process $\mathcal{B}_{CT}$ and its corresponding macro process $\mathcal{M}$. On the left, circles represent individuals grouped into sibling groups, enclosed within dashed rectangles. The macro process on the right simplifies the individual process by aggregating sibling groups into macro-individuals, represented as squares labeled by the respective sibling group sizes.}
    \label{fig:group_sibling}
\end{figure}

A sibling group is born when a birth event occurs, and the number of offspring produced at the event is then the initial size of the sibling group, distributed as  
\begin{equation}\label{eq:distribution_Y_0}
    Y_{0} \overset{d}{=} \Tilde{Z} \sim MixBin(\Tilde{C}-1, \pi_{\Tilde{C}}). 
\end{equation}
Note that a sibling group can also have size zero, in that case, it certainly produces no offspring. In addition, each of the $m$ ancestors in $\mathcal{B}_{CT}$ produces independently a number of birth events distributed as $G$, since the ancestors will not be contact traced. It could happen that the $m$ ancestors produce no birth events, in this case the process $\mathcal{B}_{CT}$ dies out just after all the ancestors die and  $\mathcal{M}$ is not needed. Turning to the more interesting case where the $m$ ancestors in $\mathcal{B}_{CT}$ give $g>0$ number of birth events, the macro process $\mathcal{M}$ is thus initiated with a number $g$ of macro-individuals. Without loss of generality, in the following we assume that the macro process $\mathcal{M}$ starts with one macro-individual, and we focus on the initial conditions in Section \ref{sec:extinction}, where we study the extinction probability. 

Let $Y(t)$ be the size of a sibling group (number of alive siblings in the group) at time $t$ after the birth event, then the process $\{Y(t)\}_{t\geq 0}$ is a \textit{continuous-time Markov jump process} on the state space $\mathbb{N}$, with initial distribution $Y(0)\overset{d}{=} Y_0$ and absorbing  state $0$. Given that there is a sibling group of size $i$, several events can occur: one sibling is recovered which happens at rate $i\gamma$; one sibling is diagnosed but none of the others are traced occurring at rate $i\delta(1-p)^{i-1}$; one sibling is diagnosed and $i-1-j$ $(j=0,...,i-2)$ other siblings are traced, occurring at rate $i\delta \binom{i-1}{i-1-j}(1-p)^{j}p^{i-1-j}$. Since these occur in continuous time, the probability of two or more events happening simultaneously is zero. 
%From a non-zero state $i$, two \textit{types} of jumps can occur. One type is that the group size is decreased by $1$ ($i\to i-1$), corresponding to one of the $i$ siblings recovering naturally, or one of the $i$ siblings being  diagnosed and all the other $i-1$ siblings ``escaping" from being traced. Another type of jump is that the group size is decreased by more than 1 ($i\to i-j, j=2,...,i$) corresponding to one of the $i$ siblings being diagnosed and $j-1$  siblings being traced, as a consequence. 

As a consequence, from a non-zero state $Y(t) = i$, two \textit{types} of jumps can occur. The first two events both result in the same type of jump: the group size is decreased by $1$ ($i\to i-1$), whereas the third one leads to another type of jump where the group size is decreased by more than 1 ($i\to j$). In conclusion, the transition rates of the process from state $i$ to $j$, denoted by $q_{i,j}$, are as follows:
\begin{equation}\label{eq:transition_rate}
  q_{i,j} = \left\{ 
\begin{array}{ll}
  i\gamma+i\delta(1-p)^{j} & \text{if } j=i-1; \\\\
  i\delta \binom{i-1}{i-j-1}(1-p)^{j}p^{i-j-1} & \text{if } j=0,...,i-2.
\end{array}\right.
\end{equation}

Each individual in a sibling group gives birth at $\beta\mu_{C}$, thus,  
%given that the current size of the sibling group is $Y(t)=j$, 
at age $t$ the whole group gives birth to a new sibling group at a total rate $Y(t)\beta\mu_{C}$. The whole sibling group dies when the Markov process $\{Y(t)\}_{t\geq 0}$ reaches state 0. We can thus simply describe the macro branching process $\mathcal{M}$ as a Crump-Mode-Jagers branching process  where the birth rate of each individual at age $t$ is independent on other individuals and distributed as  $\lambda(t):=Y(t)\beta\mu_{C}$. See \cite{jagers1975} for an overview and details on Crump-Mode-Jagers branching processes.

% \martina{I'd remove the following sentence (we have already justified heuristically why $\mathcal{B}_{CT}$ approximates the epidemic, so this is superfluous):}
% \textcolor{red}{By applying the coupling method (see e.g., \cite{ball2011threshold,ball2023strong}), the early epidemic process described in terms of sibling groups can be approximated by the macro branching process $\mathcal{M}$ for large $n$.} 

The main advantage of defining this macro branching process is that the macro-individuals are independent and we can apply standard theory of branching processes. In particular, its reproduction number, which we will derive in the following section, corresponds to the epidemic threshold, determining whether there could be a major outbreak in the large population limit. 
The probability of a major outbreak, corresponding to the probability of non-extinction of $\mathcal{B}_{CT}$, is discussed in Section \ref{sec:extinction}.

% \dongni{move this part to the discussion section?}
% \begin{figure}[htb!]
%     \centering
%     \includegraphics[width=0.8\textwidth]{Figure/sibling_groups.001.png}
%     \caption{Caption}
%     \label{fig:enter-label}
% \end{figure}

\section{Main results}
\label{sec:results}

\subsection{Effective macro reproduction number}\label{sec:R_macro}
In this section, our interest is to derive the important quantity, $R_{e}$, which is defined as the mean number of offspring of one typical “individual” in the macro branching process $\mathcal{M}$, i.e., the expected number of sibling groups produced by one typical sibling group during its lifetime (before its size decreases to $0$). 

By standard results from the theory of branching process, the macro branching process $\mathcal{M}$ dies out with probability $1$ if $R_{e} \leq 1$; instead, if $R_{e} > 1,$ the process $\mathcal{M}$ explodes with a strictly positive probability. In Section \ref{sec:extinction}, we will show that the individual process $\mathcal{B}_{CT}$ dies out with probability 1 if and only if the macro process $\mathcal{M}$ dies out with probability $1$. Then, due to the approximation outlined in Section \ref{sec:early_epi_approx}, the epidemic may result in a major outbreak with a non-zero probability if and only if $R_{e} > 1$. We therefore refer $R_{e}$ to as the \textit{effective macro reproduction number}. 
Denoting by $H$ the number of sibling groups produced by a sibling group, 
%and conditioning on the initial size of the sibling group, distributed as in  
%Recall hat the sibling group is born with size distributed as $Y_{0}$ 
%Equation (\ref{eq:distribution_Y_0}),
we have
    %\begin{equation}\label{eq:R_sum}
    $R_{e}=\expectation{H},$ 
    %= \sum_{k=0}^{\infty} \expectation[H|Y_{0}=k]\mathbb{P}[Y_{0}=k]. 
    %\end{equation}
%Note that if $Y_{0}=0$, the empty sibling group produces no offspring, so we have $\expectation[H|Y_{0}=0]=0$. In the following, we compute $\expectation[H|Y_{0}=k]$ for $k\geq 1$. 
which we compute more explicitly in the following. 

Consider a typical sibling group 
%born with size $k$. 
and let $\{Y_{k}\}_{k\in\mathbb{N}}$ be the discrete-time jump Markov chain associated to the size process $\{Y(t)\}_{t\geq 0}$. The Markov chain $\{Y_{k}\}_{k\in\mathbb{N}}$ has initial distribution given by Equation \eqref{eq:distribution_Y_0}
%$Y_0=k$ 
and absorbing state $0$. It follows from the transition rates of $\{Y(t)\}$ in Equation (\ref{eq:transition_rate}) that the transition probabilities $p_{i,j}$ (i.e., the probability of $Y_{k}$ moving from state $i$ to state $j$), $i=1,...$, are given by
\begin{equation}\label{eq:transition_prob}
  p_{i,j} = \left\{ 
\begin{array}{ll}
  \frac{\gamma+\delta(1-p)^{i-1}}{\gamma+\delta} & \text{for } j=i-1; \\\\
  \frac{\delta \binom{i-1}{i-j-1}(1-p)^{j}p^{i-j-1}}{\gamma+\delta} & \text{for } j=0,...,i-2;\\\\
  0 & \text{otherwise.}
\end{array}\right.
\end{equation}
% \martina{I have changed notation below: replaced $i$ with $k$,  $j$ with $i$ and $k$ with $i$ ( $k$ is the discrete time, and we should maintain the transition from state $i$ to state $j$). Can you please double-check that it is changed correctly? \\  }
Let $N$ be the number of jumps until $\{Y_{k}\}_{k\in\mathbb{N}}$ reaches zero, that is, 
\begin{equation}\label{eq:def_N}
    N= \inf \{k \in\mathbb{N} : Y_k=0\},
\end{equation}
and let $X_{k}, k=1,...,$ be the number of new sibling groups produced between the $(k-1)$-th jump and $k$-th jump. With this notation, we have
    \begin{equation}\label{eq:def_H}
    H= \sum_{k=1}^{N} X_{k} .
    \end{equation}
Table \ref{tab:quantities_macro_process} lists the important quantities related to the macro process $\mathcal{M}$.
    \begin{table}[htb!]
    \centering
        \caption{Key quantities related to the macro branching process}
        \label{tab:quantities_macro_process}
        \renewcommand\arraystretch{1.2}
    \begin{tabular}{|l|l|}
    \hline
    Notation  &  Description\\
    \hline
       $\mathcal{M}$ & macro branching process\\
       \hline
       $\{Y(t)\}_{t\geq 0}$, $\{Y_k\}_{k\in\mathbb{N}}$ &  
    process of sibling group size and its jump chain 
    \\ \hline
    $Y_0, \Tilde{Y}_0$& initial size of a sibling group and its size-biased version 
    \\ \hline
    $\lambda(t)=\beta\mu_{C}Y(t)$ &  birth rate by one macro-individual at age $t$ 
    \\ \hline
    $H$& number of macro-individuals generated by a typical macro-individual
    \\ \hline
    $N$& number of jumps it takes for $\{Y_k\}_{k\in\mathbb{N}}$ to reach $0$
    \\ \hline
    \end{tabular}
\end{table}

Suppose that at some point we have $Y_{k} = i \geq 1$, then the time until next jump follows an exponential time $Exp(i\gamma+i\delta)$. 
Until the next jump, each of the $i$ individuals currently alive in the group gives birth at rate $\beta\mu_{C}$, so the whole group gives birth at total rate $i\beta\mu_{C}$. Hence, $X_{k}$ is distributed as a geometric random variable $X$  with success probability $\frac{i\gamma+i\delta }{i\beta\mu_{C}+i\gamma+i\delta}=\frac{\gamma+\delta }{\beta\mu_{C}+\gamma+\delta}$, and mean 
\begin{equation}\label{eq:EX}
    \expectation{X}=\frac{i\beta\mu_{C}}{i\gamma+i\delta}=\frac{\beta\mu_{C}}{\gamma+\delta}. 
\end{equation}  
Crucially, $X_{k}$ is independent of the current size $i$, thus we have a sequence of i.i.d random variables; and, naturally, it is also independent of $N$. It follows that 
    \begin{equation}
    \label{eq:E_H}
    %\expectation[H|Y_{0}=k] = \expectation\bigg[\sum_{i=1}^{N} X_{i}\mid Y_{0}=k\bigg]= \frac{\beta\mu_{C}}{\gamma+\delta}\expectation[N|Y_{0}=k].
    R_e= \expectation{H} = \expectation{\sum_{k=1}^{N} X_{k}}
    =  \expectation{X}\expectation{N}
    = \frac{\beta\mu_{C}}{\gamma+\delta}\expectation{N}.
    \end{equation}
    
To derive $\expectation{N}$, we first condition on the initial size of the sibling group, that is, 
    \begin{equation}
    \label{eq:N_sum}
    \expectation{N}
    = \sum_{i=0}^{\infty} \expectation{N|Y_{0}=i}\mathbb{P}[Y_{0}=i]
    = \sum_{i=0}^{\infty} m_{i0} \mathbb{P}[Y_{0}=i],
    \end{equation}
where  $m_{i0}:= \expectation{N|Y_{0}=i}$ is the expected number of jumps for $\{Y_{k}\}_{k\in\mathbb{N}}$ to reach state $0$,  starting from state $i$,  $i=0,1,...$ (naturally $m_{00}=0$). 
Conditioning on the first jump from a positive $i$ to $j=0,1,...,i-1$ (which happens with probability $p_{i,j}$), the expected number of jumps from $i$ to $0$ is $1+m_{j0}$ (one single jump from $i$ to $j$ plus the expected number of jumps from $j$ to $0$). Therefore, $m_{i0},i=1,...,$ is determined by the following recursive relation
\begin{equation}\label{eq:m_k0}
  m_{i0} = \sum_{j=0}^{i-1} p_{i,j} (1+m_{j0}) = 1+\sum_{j=1}^{i-1} p_{i,j} m_{j0}, 
\end{equation}
where $p_{i,j}$ are given by Equation (\ref{eq:transition_prob}). 

Using Equation (\ref{eq:E_H}) and (\ref{eq:N_sum}), the effective macro reproduction number is given by
\begin{equation*}%\label{eq:R_e_non_explicit}
    R_{e}=\frac{\beta\mu_{C}}{\gamma+\delta}\sum_{i=0}^{\infty} m_{i0}\mathbb{P}(Y_{0}=i)
\end{equation*}
with $m_{i0}$ in Equation (\ref{eq:m_k0}) and   $Y_{0}\sim MixBin(\Tilde{C}-1, \pi_{\Tilde{C}})$. We have thus proved the following Theorem. 
% Alternatively, we can write the above expression more explicitly as 
% \begin{align*}
%     % R_{CT}^{(c)} &= \frac{\beta\mu_{C}}{\gamma+\delta} \sum_{c=2}^{\infty}\mathbb{P}(\Tilde{C}=c)
%     % \sum_{k=1}^{c-1} m_{k0} \binom{c-1}{k}\pi_{c}^{k}(1-\pi_{c})^{c-k-1}\\
%     R_{e} &= \frac{\beta}{\gamma+\delta} \sum_{c=2}^{\infty}c\mathbb{P}(C=c)
%     \sum_{k=1}^{c-1} m_{i0} \binom{c-1}{i}\pi_{c}^{i}(1-\pi_{c})^{c-i-1}.
% \end{align*}
%We summarize the explicit expression of $R_{e}$ in the following theorem.
\begin{thm}\label{thm: R_e}
The macro reproduction number $R_{e}$ for the macro branching process $\mathcal{M}$ is given by
\begin{equation}\label{eq:R_e_explicit}
    R_{e}= \frac{\beta}{\gamma+\delta} \sum_{c=2}^{\infty}c\mathbb{P}(C=c)
    \sum_{i=1}^{c-1} m_{i0} \binom{c-1}{i}\pi_{c}^{i}(1-\pi_{c})^{c-i-1},
\end{equation}
where $m_{i0}$ is given by Equation (\ref{eq:m_k0}).
\end{thm}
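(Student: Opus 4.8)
The plan is to collect the pieces assembled in the preceding discussion: the identity $R_e=\mathbb{E}[H]$, the decomposition $H=\sum_{k=1}^{N}X_k$ of \eqref{eq:def_H}, the geometric law of the increments $X_k$, and the recursion \eqref{eq:m_k0} for $m_{i0}$; the theorem then follows by substituting the mixed-binomial initial law of $Y_0$ and interchanging two sums. Concretely, the first step is the Wald-type reduction. Conditioning on the jump chain $\{Y_k\}_{k\in\mathbb{N}}$, when $Y_{k-1}=i\ge 1$ the waiting time to the next jump is $Exp\big(i(\gamma+\delta)\big)$ while the group gives birth at total rate $i\beta\mu_C$, so the number $X_k$ of new sibling groups produced before the next jump is geometric with success probability $\frac{i(\gamma+\delta)}{i\beta\mu_C+i(\gamma+\delta)}=\frac{\gamma+\delta}{\beta\mu_C+\gamma+\delta}$, which does not depend on $i$. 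Hence the $X_k$ are i.i.d.\ with mean $\frac{\beta\mu_C}{\gamma+\delta}$ as in \eqref{eq:EX}, and they are independent of $N$ (a functional of the jump chain alone). Therefore $R_e=\mathbb{E}[H]=\mathbb{E}[X]\,\mathbb{E}[N]=\frac{\beta\mu_C}{\gamma+\delta}\,\mathbb{E}[N]$, recovering \eqref{eq:E_H}.

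The second step is to evaluate $\mathbb{E}[N]$. Conditioning on the initial size gives $\mathbb{E}[N]=\sum_{i\ge 0}m_{i0}\,\mathbb{P}(Y_0=i)$ as in \eqref{eq:N_sum}, with $m_{i0}=\mathbb{E}[N\mid Y_0=i]$; a one-step analysis of the jump chain, using the transition probabilities \eqref{eq:transition_prob}, yields the recursion \eqref{eq:m_k0} with $m_{00}=0$, which determines all the $m_{i0}$, since the right-hand side of \eqref{eq:m_k0} involves only $m_{j0}$ with $j<i$.

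The last step is the substitution. By \eqref{eq:distribution_Y_0}, $\mathbb{P}(Y_0=i)=\sum_{c\ge 2}p_{\Tilde{C}}(c)\binom{c-1}{i}\pi_c^{i}(1-\pi_c)^{c-1-i}$, and the size-biasing identity $p_{\Tilde{C}}(c)=c\,p_C(c)/\mu_C$ makes the prefactor $\mu_C$ cancel: $\mu_C\,\mathbb{P}(Y_0=i)=\sum_{c\ge 2}c\,p_C(c)\binom{c-1}{i}\pi_c^{i}(1-\pi_c)^{c-1-i}$. Plugging this into $R_e=\frac{\beta\mu_C}{\gamma+\delta}\sum_{i\ge 0}m_{i0}\mathbb{P}(Y_0=i)$, interchanging the (nonnegative) sums over $i$ and $c$, dropping the $i=0$ term because $m_{00}=0$, and using $\binom{c-1}{i}=0$ for $i\ge c$ so that the inner sum runs only over $i=1,\dots,c-1$, gives exactly \eqref{eq:R_e_explicit}.

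The only points requiring care — the main obstacle, such as it is — are the independence claims behind the Wald-type step and the legitimacy of the sum interchange. The former is handled by the strong Markov property of $\{Y(t)\}_{t\ge 0}$ together with the observation above that the geometric parameter of $X_k$ is the same in every state. The latter is essentially free: the jump chain $\{Y_k\}$ is strictly decreasing until absorption (every transition in \eqref{eq:transition_rate} lowers the size by at least one), so $N\le Y_0$ almost surely, hence $m_{i0}\le i$ and $\mathbb{E}[N]\le\mathbb{E}[Y_0]=\mathbb{E}[(\Tilde{C}-1)\pi_{\Tilde{C}}]<\infty$ under the standing integrability assumptions on $C$, and Tonelli's theorem then justifies the rearrangement. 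In particular $R_e<\infty$.
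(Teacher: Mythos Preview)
Your proof is correct and follows essentially the same route as the paper: the paper derives the formula in the discussion immediately preceding the theorem via exactly the same three steps---the Wald-type reduction $R_e=\mathbb{E}[X]\,\mathbb{E}[N]$, the conditioning $\mathbb{E}[N]=\sum_i m_{i0}\,\mathbb{P}(Y_0=i)$ with the recursion for $m_{i0}$, and the substitution of the mixed-binomial law together with the size-biasing identity. Your added remarks on the strong Markov property and the bound $m_{i0}\le i$ (to justify Tonelli) supply rigor that the paper leaves implicit, but the argument is otherwise the same.
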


Another possible approach to compute $R_e$ from Equation \eqref{eq:E_H} consists of numerically approximating the expected number of steps, $\expectation{N}$, through a Monte Carlo integration based on simulating the Markov chain $\{Y_{k}\}_{k\in\mathbb{N}}$.

\subsection{Effective individual reproduction number}\label{sec:R_ind}
While it is sufficient to use the  effective  macro reproduction number $R_{e}$ to analyse the threshold behaviour of the early epidemic, as shown in the next section,  for completeness, in this section we show  that  $R_{e}$ corresponds to the \textit{effective individual reproduction number} $R^{(ind)}_{e}$ of Equation \eqref{eq:R_ind_1}. 
\begin{pro}\label{pro:ind_reprod_number}
Let $R_{e}$ be the macro reproduction number of $\mathcal{M}$ given by Equation \eqref{eq:R_e_explicit}, and let $R^{(ind)}_{e}$ be the individual reproduction number of $\mathcal{B}_{CT}$ given by Equation \eqref{eq:R_ind_1}. Then, we have 
\begin{equation}
    R^{(ind)}_{e}= R^{}_{e}.
\end{equation}
\end{pro}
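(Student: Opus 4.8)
The plan is to establish the identity structurally, without computing $\expectation{G_{CT}}$ in closed form, by relating a sibling group to the individuals composing it. Recall from Section~\ref{sec:R_macro} that $R_{e}=\expectation{H}$, from \eqref{eq:R_ind_1} that $R^{(ind)}_{e}=\expectation{G_{CT}}\expectation{\tilde{Z}}$, and from \eqref{eq:distribution_Y_0} that $Y_{0}\overset{d}{=}\tilde{Z}$; so it is enough to show $\expectation{H}=\expectation{G_{CT}}\,\expectation{Y_{0}}$. First I would record the bookkeeping identity: in $\mathcal{B}_{CT}$ each birth event is generated by exactly one individual, so the number $H$ of child sibling groups of a given sibling group is the sum, over the $Y_{0}$ original members of that group, of the numbers of birth events each member ever generates. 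Writing $G^{(\ell)}_{CT}$ for the number of birth events produced by the $\ell$-th member of a group of initial size $Y_{0}$, linearity of expectation gives $\expectation{H\mid Y_{0}=y}=\sum_{\ell=1}^{y}\expectation{G^{(\ell)}_{CT}\mid Y_{0}=y}$, and since siblings in a group are exchangeable (each recovers at rate $\gamma$, is diagnosed at rate $\delta$, is traced symmetrically when a co-sibling is diagnosed, and gives birth at rate $\beta\mu_{C}$), this equals $y\,\expectation{G^{(1)}_{CT}\mid Y_{0}=y}$.

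Next I would pin down the law of the sibling group to which a typical individual belongs: a birth event of size $y$ creates $y$ individuals, so among the individuals produced in $\mathcal{B}_{CT}$ the proportion lying in a group of initial size $y$ is proportional to $y\,\mathbb{P}(Y_{0}=y)$, and hence the initial size of a typical individual's group is the size-biased variable $\tilde{Y}_{0}$, with $\mathbb{P}(\tilde{Y}_{0}=y)=y\,\mathbb{P}(Y_{0}=y)/\expectation{Y_{0}}$, the typical individual being one of the $y$ exchangeable members given this size. Then
\begin{align*}
\expectation{G_{CT}}&=\sum_{y}\mathbb{P}(\tilde{Y}_{0}=y)\,\expectation{G^{(1)}_{CT}\mid Y_{0}=y}
=\frac{1}{\expectation{Y_{0}}}\sum_{y}\mathbb{P}(Y_{0}=y)\,y\,\expectation{G^{(1)}_{CT}\mid Y_{0}=y}\\
&=\frac{1}{\expectation{Y_{0}}}\sum_{y}\mathbb{P}(Y_{0}=y)\,\expectation{H\mid Y_{0}=y}=\frac{\expectation{H}}{\expectation{Y_{0}}},
\end{align*}
and multiplying by $\expectation{\tilde{Z}}=\expectation{Y_{0}}$ would give $R^{(ind)}_{e}=\expectation{G_{CT}}\expectation{\tilde{Z}}=\expectation{H}=R_{e}$.

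As a cross-check tied to the explicit formulas, one could instead argue computationally: letting $g_{i}$ be the expected number of birth events a tagged alive member of a group of current size $i$ will still produce, conditioning on the next jump of $\{Y_{k}\}_{k\in\mathbb{N}}$ and using that this tagged member survives a jump $i\to j$ with probability $j/i$ (exchangeability again) yields $g_{i}=\frac{\beta\mu_{C}}{i(\gamma+\delta)}+\sum_{j=1}^{i-1}p_{i,j}\frac{j}{i}g_{j}$ with $p_{i,j}$ from \eqref{eq:transition_prob}; comparison with \eqref{eq:m_k0} then gives $i\,g_{i}=\frac{\beta\mu_{C}}{\gamma+\delta}m_{i0}$, so that $\expectation{G_{CT}}=\sum_{i}\mathbb{P}(\tilde{Y}_{0}=i)\,g_{i}=\frac{\beta\mu_{C}}{(\gamma+\delta)\expectation{Y_{0}}}\sum_{i}m_{i0}\mathbb{P}(Y_{0}=i)$, which combined with \eqref{eq:N_sum}, \eqref{eq:E_H} and \eqref{eq:R_e_explicit} recovers $R^{(ind)}_{e}=R_{e}$.

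I expect the main obstacle to be the identification of the typical individual's sibling-group size with the size-biased law $\tilde{Y}_{0}$ rather than $Y_{0}$: this is precisely the point that makes the two reproduction numbers agree — with $Y_{0}$ in place of $\tilde{Y}_{0}$ the identity would fail in general, since $Y_{0}$ and $m_{Y_{0}0}/Y_{0}$ need not be uncorrelated — and it has to be reconciled cleanly with the precise meaning of ``typical individual'' underlying the definition of $G_{CT}$ in \eqref{eq:R_ind_1}. The remaining ingredients (linearity of expectation, exchangeability of siblings, and the independence of $\tilde{Z}$ from the lifespan mechanics used to factor $R^{(ind)}_{e}$) are routine.
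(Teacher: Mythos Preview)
Your argument is essentially the same as the paper's: both reduce to showing $\expectation{G_{CT}}=\expectation{H}/\expectation{Y_{0}}$ via the size-biased law $\tilde{Y}_{0}$ for the typical individual's group and the exchangeability relation $y\,\expectation{G^{(1)}_{CT}\mid Y_{0}=y}=\expectation{H\mid Y_{0}=y}$ (the paper writes this as $i\,\expectation{G_{CT}\mid \tilde{Y}_{0}=i}=\expectation{H\mid Y_{0}=i}$). Your recursion-based cross-check relating $g_{i}$ to $m_{i0}$ is a nice addition not present in the paper, but the main proof matches.
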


Such correspondence is trivial in the absence of contact tracing, as shown in the following remark. 
\begin{rem}
In the situation when $p=0$, the size of sibling group decreases by one at each jump (due to one individual naturally recovering or being diagnosed) and we have the transition probability $p_{i,i-1}=1$.  Therefore,  $m_{i0}(p=0)=i, $ and $\expectation{N}=\expectation{Y_0}$.
It follows with Equation (\ref{eq:E_H}) that 
\begin{equation*}
    R_{e}(p=0)= \expectation{X} \expectation{N}
    %= \frac{\beta\mu_{C}}{\gamma+\delta}\sum_{i=0}^{\infty} i\mathbb{P}[Y_{0}=i]
    =\frac{\beta\mu_{C}}{\gamma+\delta}\expectation{Y_{0}}.
\end{equation*}
Since $Y_{0} \overset{d}{=} \Tilde{Z}$, the expression above corresponds to Equation \eqref{eq:R0_diagnosis}, thus  
\begin{equation*}
    R_{e}(p=0)= R^{(ind)}_{e} (p=0).
    %\frac{\beta\mu_{C}}{\gamma+\delta}\expectation[\Tilde{Z}]
\end{equation*}
\end{rem}

\begin{proof}[Proof of Proposition \ref{pro:ind_reprod_number}]

In general, when $0\leq p\leq1$, $R^{(ind)}_{e}$ is given by Equation \eqref{eq:R_ind_1}. In order to show $R^{(ind)}_{e}= R^{}_{e}$, it is thus enough to show that  
    \begin{equation}
    \label{eq:correspondence_ind_macro_R}
     \expectation{G_{CT}}\expectation{\Tilde{Z}} 
    = \expectation{H} , 
    \end{equation}
recalling that $G_{CT}$ and $\Tilde{Z}$ are respectively the number of birth events and the number of offspring in one birth event generated by a typical individual in  $\mathcal{B}_{CT}$; and $H$ is the number of macro-individuals (sibling groups) generated by a typical macro-individual in  $\mathcal{M}$.
%However, the distribution of $G_{CT}$ is not straightforward. It depends on the number of individuals who are born in the same birth event as $i_{0}$. 

The initial size of a sibling group containing the typical individual we are considering is  distributed as $\tilde{Y}_{0}$, the size-biased version of $Y_{0}$, i.e.  
    \begin{equation}\label{eq:prob_tilde_Y}
  \mathbb{P}(\Tilde{Y}_{0}=i)
  %= \frac{k \mathbb{P}(Y_{0}=k)}{\sum_{k'}\mathbb{P}(Y_{0}=k')}
  = \frac{i \mathbb{P}(Y_{0}=i)}{\expectation{Y_{0}}}, \quad (i =0,1,2,...).
    \end{equation}
%Moreover, the probability that there are $k$ individuals who are born in the same birth event as $i_{0}$ is equivalent to that $i_{0}$ belongs to a sibling group born with size $k$. 
%We recall that the initial size of a sibling group is $Y_{0} \sim Bin(\Tilde{C}-1, \pi_{\Tilde{C}})$, so the probability that a group is born with size $k$ given that it includes $i_{0}$ is given by 
%Therefore, the initial size of a sibling group include one typical individual follows the size-biased version $\Tilde{Y}_{0}$ with probability distribution in Equation (\ref{eq:prob_tilde_Y}). 
Furthermore, each of the individual who belongs to a sibling group  with initial size $i$, produces, on average, $\expectation{G_{CT}|\Tilde{Y_{0}}=i}$ number of sibling groups (despite being dependent, they are identically distributed). Thus the whole sibling group of initial size $i$ generates $i \expectation{G_{CT}|\Tilde{Y_{0}}=i}$ number of sibling groups on average. This implies the following relation
\begin{equation}
    i \expectation{G_{CT}|\Tilde{Y_{0}}=i} = \expectation{H|{Y_{0}}=i}. 
\end{equation}
%Particularly, we have 
% \begin{equation}
%     \expectation[G_{CT}|\Tilde{Y_{0}}=k] = \frac{\expectation[H|{Y_{0}}=k]}{k} \quad k=1,2,.... 
% \end{equation}
Consequently, the expectation of $G_{CT}$ is given by  
\begin{align*}
    \expectation{G_{CT}}=\sum_{i=0}^{\infty}  \expectation{G_{CT}|\Tilde{Y_{0}}=i}  \mathbb{P}(\Tilde{Y_{0}}=i)&= \sum_{i=1}^{\infty} \frac{\expectation{H|{Y_{0}}=i}}{i} \frac{i \mathbb{P}(Y_{0}=i)}{\expectation{Y_{0}}}
    %=\sum_{{i=0}}^{\infty} {\expectation[H|{Y_{0}}=i]}{} \frac{ P(Y_{0}=i)}{\expectation[Y_{0}]}
    = \frac{\expectation{H}}{\expectation{Y_{0}}}  
    .
\end{align*}
% \begin{align*}
%     \expectation[G_{CT}] &= \sum_{k=0}^{\infty}  \expectation[G_{CT}|\Tilde{Y_{0}}=k]  P(\Tilde{Y_{0}}=k)\\
%     &= \sum_{{k=1}}^{\infty} \frac{\expectation[H|{Y_{0}}=k]}{k} P(\Tilde{Y_{0}}=k)\\
%     &= \sum_{k=1}^{\infty} \frac{\expectation[H|{Y_{0}}=k]}{k} \frac{k P(Y_{0}=k)}{\expectation[Y_{0}]}\\
%     &= \sum_{k=1}^{\infty} {\expectation[H|{Y_{0}}=k]}{} \frac{ P(Y_{0}=k)}{\expectation[Y_{0}]}\\
%     & = \sum_{{k=0}}^{\infty} {\expectation[H|{Y_{0}}=k]}{} \frac{ P(Y_{0}=k)}{\expectation[Y_{0}]}.\\
% \end{align*}
Therefore, since $\expectation{Y_{0}}=\expectation{\tilde{Z}}$, the expression above proves Equation \eqref{eq:correspondence_ind_macro_R}, and finally that
%together with Equation (\ref{eq:R_ind_1}),  yields that 
\begin{equation}
    R^{(ind)}_{e} = \expectation{{G_{CT}}}\expectation{\Tilde{Z}}
    = \frac{\expectation{H}}{\expectation{Y_{0}}} \expectation{ \Tilde{Z}}
    %\sum_{k=0}^{\infty} {\expectation[H|{Y_{0}}=k]}{} \frac{ P(Y_{0}=k)}{\expectation[Y_{0}]}%= \sum_{k=0}^{\infty} \expectation[H|{Y_{0}}=k] P(Y_{0}=k) 
   = \expectation{H}= R_{e}.
\end{equation}
%where the last equality is due to $\expectation[Y_{0}]=\expectation[\Tilde{Z}]$ (see Equation (\ref{eq:distribution_Y_0})). 

\end{proof}

In conclusion, the individual reproduction number $R^{(ind)}_{e}$ has the same expression as the macro reproduction number $R_e$ and hence, despite the dependencies between individuals, it inherits from $R_e$ the epidemic threshold property which is proven in the next section. 

\subsection{Extinction probability}
\label{sec:extinction}
 
The goal of this section is to ensure that the macro reproduction number $R_{e}$ possesses the important threshold property %as discussed in Section \ref{sec:R_macro} 
and to provide an expression for the probability of non-extinction of the branching process  $\mathcal{B}_{CT}$. As explained in Section \ref{sec:early_epi_approx}, the non-extinction of $\mathcal{B}_{CT}$ corresponds to a major outbreak in the epidemic, in the large population limit. It remains to show that the process $\mathcal{B}_{CT}$ goes extinct with probability 1 if and only if the macro process $\mathcal{M}$ goes extinct with probability 1; otherwise, the two processes $\mathcal{B}_{CT}$ and  $\mathcal{M}$ explode with strictly positive probabilities.

In the following, we obtain an expression for the extinction probability of the branching process $\mathcal{B}_{CT}$, denoted by $z_{\mathcal{B}_{CT}}$ when there is one ancestor, and equal to $z_{\mathcal{B}_{CT}}^m$ when there are $m$ ancestors. Consider now the process $\mathcal{M}$ initiated with one macro-individual and denote by $z_{\mathcal{M}}$ its extinction probability. 
By the standard theory of branching processes, 
$z_{\mathcal{M}}$  is the smallest solution in $[0,1]$ of $f_H(s)=s$, with  $f_H(s)$ being the offspring probability generating function for $\mathcal{M}$ and $H$ defined in Equation \eqref{eq:def_H}. More explicitly, 
    \begin{equation*}
     f_H(s)= \expectation{s^H}
     %= \expectation{s^{\sum_{k=1}^{N} X_{k}}}   
     =\expectation{\expectation{s^{\sum_{k=1}^{N} X_{k}}\mid N }  } 
     % =\expectation{\expectation{s^ X} ^{N} } 
     =\expectation{f_X(s) ^{N}},
    \end{equation*}
where $N$ is defined in Equation \eqref{eq:def_N} and $$f_X(s)=\expectation{s^ X}= \frac{\gamma+\delta}{\gamma+\delta+\beta\mu_C (1-s)} $$ is the probability generating function of $X$.

To obtain the extinction probability $z_{\mathcal{B}_{CT}}$, we condition on the number of birth events (sibling groups/macro-individuals), denoted by $J$, produced by the single ancestor in $\mathcal{B}_{CT}$ being equal to $j$. We then consider the macro process $\mathcal{M}$ initiated with $j$ macro-individuals, which has extinction probability $z_{\mathcal{M}}^j$. This implies that
    \begin{equation}\label{eq:extinction_prob_step_1}
    z_{\mathcal{B}_{CT}} = \expectation{z_{\mathcal{M}}^{J}} = \sum_{j=0}^{\infty}\expectation{z_{\mathcal{M}}^{J}| J=j}\mathbb{P}(J=j)=\sum_{j=0}^{\infty}z_{\mathcal{M}}^{j}\mathbb{P}(J=j)
     \end{equation}
    %  \\
    % &= \sum_{j=0}^{\infty}{z_{\mathcal{M}}^{j} \left(\frac{\beta\mu_{C}}{\gamma+\delta+\beta\mu_C}\right)^j \frac{\gamma+\delta}{\gamma+\delta+\beta\mu_C}}\\
    % &= \frac{\gamma+\delta}{\gamma+\delta+\beta\mu_C} \sum_{j=0}^{\infty} \left(\frac{z_{\mathcal{M}}\beta\mu_{C}}{\gamma+\delta+\beta\mu_C}\right)^j \\
    % &= \frac{\gamma+\delta}{\gamma+\delta+\beta\mu_C (1-z_{\mathcal{M}})}.
Since the ancestor cannot be contact traced, but can recover or be diagnosed, $J$ is geometrically distributed with success probability $(\gamma+\delta)/(\gamma+\delta+\beta\mu_{C})$ - similar to distribution of $G$ in Equation \eqref{eq:G_dist}. That is, we have 
\begin{equation*}
	\mathbb{P}(J=j)=\left(\frac{\beta\mu_{C}}{\gamma+\delta+\beta\mu_C}\right)^j \frac{\gamma+\delta}{\gamma+\delta+\beta\mu_C}, \qquad (j=0,1,...).
\end{equation*}
Embedding the equation above into Equation \eqref{eq:extinction_prob_step_1} gives 
\begin{equation}\label{eq:extinction_prob}
    z_{\mathcal{B}_{CT}} =\frac{\gamma+\delta}{\gamma+\delta+\beta\mu_C} \sum_{j=0}^{\infty} \left(\frac{z_{\mathcal{M}}\beta\mu_{C}}{\gamma+\delta+\beta\mu_C}\right)^j = \frac{\gamma+\delta}{\gamma+\delta+\beta\mu_C (1-z_{\mathcal{M}})}.
\end{equation}
Note that in the epidemic setting with $m$ initial infectives, a major outbreak occurs with probability $1-z_{\mathcal{B}_{CT}}^{m}$.

Moreover, Equation \eqref{eq:extinction_prob} indicates that $z_{\mathcal{B}_{CT}} = 1$ is equivalent to $z_{\mathcal{M}} = 1$; and the inequality  $z_{\mathcal{B}_{CT}}<1$ is equivalent to the inequality $z_{\mathcal{M}}< 1$. As a consequence, the process $\mathcal{B}_{CT}$ dies out with probability 1 if $R_{e} \leq 1$; if $R_{e} > 1$, the process $\mathcal{B}_{CT}$ starting with one ancestor explodes with a strictly positive probability $(1-z_{\mathcal{B}_{CT}})$. This confirms that the macro reproduction number $R_{e}$ has the epidemic threshold property.

We summarize the arguments above  in the following theorem. 
\begin{thm}\label{thm:extinction}
    Let $\mathcal{B}_{CT}$ be the process approximating the early stage of the epidemic defined in Section \ref{sec:early_epi_approx}, and $R_{e}$ be the macro reproduction number given by Equation \eqref{eq:R_e_explicit}, then $R_{e}$ has the threshold property for $\mathcal{B}_{CT}$. That is,  if $R_{e} \leq 1$, the process $\mathcal{B}_{CT}$ goes extinct with probability 1; if $R_{e} > 1$, the process $\mathcal{B}_{CT}$ starting with $m \geq 1$ ancestor(s) explodes with a strictly positive probability $(1-z_{\mathcal{B}_{CT}}^{m})$ and goes extinct with the complementary probability $z_{\mathcal{B}_{CT}}^{m}$, with $z_{\mathcal{B}_{CT}}$ given by Equation \eqref{eq:extinction_prob}.
    
\end{thm}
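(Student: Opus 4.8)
The plan is to derive everything from the single relation in Equation~\eqref{eq:extinction_prob} together with the standard fixed-point characterisation of extinction for the independent macro process $\mathcal{M}$. First I would recall that $\mathcal{M}$ is a genuine Galton--Watson / Crump--Mode--Jagers branching process whose macro-individuals reproduce independently with offspring law $H$; hence by classical theory $z_{\mathcal{M}}$ is the smallest root in $[0,1]$ of $f_H(s)=s$, and $z_{\mathcal{M}}=1$ if and only if $\expectation{H}=R_e\le 1$ (using that $H$ is not a.s.\ equal to $1$, which holds here since $X$ is geometric and can be $0$), while $z_{\mathcal{M}}<1$ precisely when $R_e>1$. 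This is the content I am allowed to assume from ``standard theory of branching processes'' as invoked just before the theorem.

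Next I would establish the equivalence $z_{\mathcal{B}_{CT}}=1 \iff z_{\mathcal{M}}=1$. Inspecting Equation~\eqref{eq:extinction_prob}, the map $z_{\mathcal{M}}\mapsto z_{\mathcal{B}_{CT}}=\frac{\gamma+\delta}{\gamma+\delta+\beta\mu_C(1-z_{\mathcal{M}})}$ is strictly increasing on $[0,1]$ and fixes $1$; so $z_{\mathcal{M}}=1$ forces $z_{\mathcal{B}_{CT}}=1$, and $z_{\mathcal{M}}<1$ forces $z_{\mathcal{B}_{CT}}<1$ (the denominator is strictly larger than the numerator). Combining with the previous paragraph: $R_e\le 1 \Rightarrow z_{\mathcal{M}}=1 \Rightarrow z_{\mathcal{B}_{CT}}=1$, i.e.\ $\mathcal{B}_{CT}$ with one ancestor dies out a.s., and then with $m$ i.i.d.\ ancestors it dies out with probability $z_{\mathcal{B}_{CT}}^m=1$; conversely $R_e>1 \Rightarrow z_{\mathcal{B}_{CT}}<1$, giving strictly positive explosion probability $1-z_{\mathcal{B}_{CT}}^m$ for $m\ge 1$ ancestors. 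The $m$-ancestor bookkeeping is immediate because, as noted in the text, the $m$ ancestors cannot be contact traced and hence spawn independent copies of the one-ancestor process.

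The only genuinely non-routine point — and the step I would treat most carefully — is justifying the coupling behind Equation~\eqref{eq:extinction_prob_step_1}, namely that conditionally on the ancestor producing $J=j$ birth events, the descendant process is distributed as $\mathcal{M}$ started from $j$ independent macro-individuals. This requires arguing that the $j$ sibling groups born to the ancestor evolve independently of each other and of everything else: independence across birth events holds because distinct birth events involve disjoint freshly-infected individuals and the tracing dependencies are confined within a single sibling group (as established in Section~\ref{sec:macro_process}), and the ancestor's own fate decouples because it is never traced. I would also note in passing that $J$ is geometric with success probability $(\gamma+\delta)/(\gamma+\delta+\beta\mu_C)$ for the same reason the ancestor behaves like a $p=0$ individual, which is exactly what is used to sum the series in Equation~\eqref{eq:extinction_prob}. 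Once this coupling is in place, the theorem is just the assembly of the three ingredients above, and I would close by restating that $z_{\mathcal{B}_{CT}}$ is the explicit expression in Equation~\eqref{eq:extinction_prob} evaluated at the relevant root $z_{\mathcal{M}}$ of $f_H$.
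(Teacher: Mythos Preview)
Your proposal is correct and follows essentially the same route as the paper: use the standard fixed-point characterisation for the independent macro process $\mathcal{M}$ to get $z_{\mathcal{M}}=1\iff R_e\le 1$, then transfer this to $\mathcal{B}_{CT}$ via the conditioning on the ancestor's number $J$ of birth events (Equation~\eqref{eq:extinction_prob_step_1}) and the explicit geometric sum (Equation~\eqref{eq:extinction_prob}), observing that the resulting map sends $z_{\mathcal{M}}=1$ to $z_{\mathcal{B}_{CT}}=1$ and $z_{\mathcal{M}}<1$ to $z_{\mathcal{B}_{CT}}<1$. Your added remarks---the strict monotonicity of the map, the non-degeneracy of $H$, and the explicit justification of why the $j$ sibling groups spawned by the ancestor evolve as independent copies of $\mathcal{M}$---are all sound and slightly more detailed than what the paper states, but they do not change the argument.
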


\section{Numerical results}\label{sec:numerical}

The goal of this section is to illustrate the effect of sideward contact tracing by means of some numerical examples relying on the theoretical results of the previous section. In particular, we consider two forms of infection probability: one being fixed, i.e., $\pi_{c} \equiv \pi$ and another depends on the size, defined as $\pi_{c}=\frac{2}{c}$ for all $c \geq 2$. We consider this form for two main reasons. First, in an event involving only two people, the probability of one infectious individual transmitting the infection to the other is effectively 1, as in the standard stochastic SIR model. Second, this form serves as an example to illustrate how infection probability decreases as the event size increases. In larger events, an infective is likely to interact with more people, potentially dispersing the risk of close contact with any single individual.
Notably, when $\pi_{c}={2}/{c}$, the basic reproduction number with diagnosis $R_{0}$ given by Equation \eqref{eq:R0_diagnosis} equals ${2\beta (\mu_{C} -1)}/{(\gamma+\delta)}.$ 

Additionally, we consider three different distribution of $C$ all chosen to have the same mean $\mu_{C}$ (following the choice in \cite{ball2022epidemic}): 
\begin{enumerate}[(i)]
\item $C$ follows a logarithmic distribution with
\begin{equation*}
    \mathbb{P}(C=c)= \frac{(1-\alpha)^{c}}{(-\log(\alpha) - (1-\alpha))c}, \quad (c=2,3,...)
\end{equation*}
and in this case, the mean event size $\mu_{C} = \frac{(1-\alpha)^{2}}{(-\log(\alpha) - (1-\alpha))\alpha}$.

\item $C$ follows a geometric distribution, conditioned on being larger than 2, that is,  
\begin{equation*}
    \mathbb{P}(C=c)= (1-\alpha)^{c-2}\alpha, \quad (c=2,3,...)
\end{equation*}
with $\alpha = \frac{1}{\mu_{C}-1}.$

\item  The event size is fixed: $C \equiv \mu_{C}$.
\end{enumerate}

Through this examples, we investigate how the choice of mixing group size distribution and the forms of infection probability affects the effect of sideward contact tracing by examining its impact on the reduction of the reproduction number. The reproduction number $R_e$ is computed using its analytical expression in Equation \eqref{eq:R_e_explicit}. In our calculations, to ensure computational feasibility while maintaining approximation accuracy, the infinite sum over $c$ is truncated at $500$ in all the cases of $\mu_{C}$ and distributions of $C$.

% truncated at 100 for $\mu_{C}=5$ and $200$ if $\mu_{C}=10$. For $\mu_{C}=20$, the truncation limit is set to $500$ if $C$ follows logarithmic distribution, and $200$ otherwise. This is due to the differences in the tail behaviours of the logarithmic and geometric distributions. The logarithmic probability distribution decays more gradually than the geometric, leading to higher probabilities to larger group sizes $c$. 

\begin{rem}\label{rem:choice_pi}
Whenever we compare scenarios corresponding to the two possible forms of $\pi_c$, i.e. $\pi_{c} \equiv \pi$ and $\pi_{c} = {2}/{c}$, 
for a fair comparison, we choose the constant $\pi$ so that the average number of births per event is the same for both forms.  
More precisely, when $\pi_{c} = {2}/{c}$,  the average number of births per event is given by $\mathbb{E}[(\Tilde{C}-1)\pi_{\Tilde{C}}]$, implying the following condition for $\pi$: 
\begin{equation*}
    \frac{\pi\mathbb{E}[C(C-1)]}{\mu_{C}} = \frac{2(\mu_{C}-1)}{\mu_{C}}.
\end{equation*}
Solving for $\pi$ leads to 
$$\pi = \frac{2(\mu_{C}-1)}{\mathbb{E}[C(C-1)]}.$$
Specifically, if $C \sim Geo(\alpha),$ we have $\pi = \frac{1}{\mu_{C}-1}$; for $C \sim Log(\alpha)$, we get $\pi = \frac{2(\mu_{C}-1)\alpha}{\mu_{C}}$ and  $C \equiv \mu_{C}$ gives $\pi = \frac{2}{\mu_{C}}.$
\end{rem}
\begin{rem}\label{rem:R_fix_size}
Note that when $C$ is fixed, the varying infection probability $\pi_{c}=\frac{2}{c}$ if $c = \mu_{c};$ $\pi_{c}=0,$ otherwise. As a consequence, when the other parameters are the same, the effective reproduction number $R_{e}$ in terms of fixed infection probability $\pi = \frac{2}{\mu_{C}}$ is the same as the reproduction number $R_{e}$ for the varying $\pi_{c}$.
\end{rem}

First, we plot the effective reproduction number $R_{e}$ as a function of the fraction of diagnosis, $\delta/(\delta+\gamma)$, and tracing probability $p$ for three distributions of $C$: logarithmic (Figure \ref{fig:heatmap_R_log}), geometric (Figure \ref{fig:heatmap_R_geo}) and being fixed (Figure \ref{fig:heatmap_R_fix}). 
In the left panels of each figure, the infection probability $\pi_{c}$ is assumed to be independent of $c$, i.e. $\pi_{c}\equiv \pi$; in the right panels, $\pi_{c} = {2}/{c}$ for all $c$. The fixed infection probability $\pi$ is chosen specifically for each distribution of $C$ to result in the same average number of 
births per event as the case when $\pi_{c} = {2}/{c}$ (see Remark \ref{rem:choice_pi}). In either of the cases, we set $\gamma=1/7$ and choose $\beta$ such that the basic reproduction number $R_{0}$ in Equation (\ref{eq:R0_general}) equals $3$. 

As shown in Figure \ref{fig:heatmap_R_log}, \ref{fig:heatmap_R_geo} and \ref{fig:heatmap_R_fix}, $R_{e}$ is monotonically decreasing in both $\delta/(\delta+\gamma)$ and $p$ as expected. Additionally, the diagnosis fraction $\delta/(\delta+\gamma)$ appears to have a stronger effect on reducing $R_{e}$ compared to tracing probability $p$ with this difference being slightly more pronounced when $\pi_{c}$ is fixed. The choice of group size distribution has a relatively weak effect on the influence of $\delta/(\delta+\gamma)$ and $p$ on reducing $R_{e}$, as the observed patterns are similar across distributions. 
Since the three figures (Figure \ref{fig:heatmap_R_log}, \ref{fig:heatmap_R_geo} and \ref{fig:heatmap_R_fix}) show similar behaviour, we report only one of them here and refer to Appendix \ref{appendix:numerical} for the other two.  

\begin{figure}
    \centering
    \includegraphics[width=\textwidth]{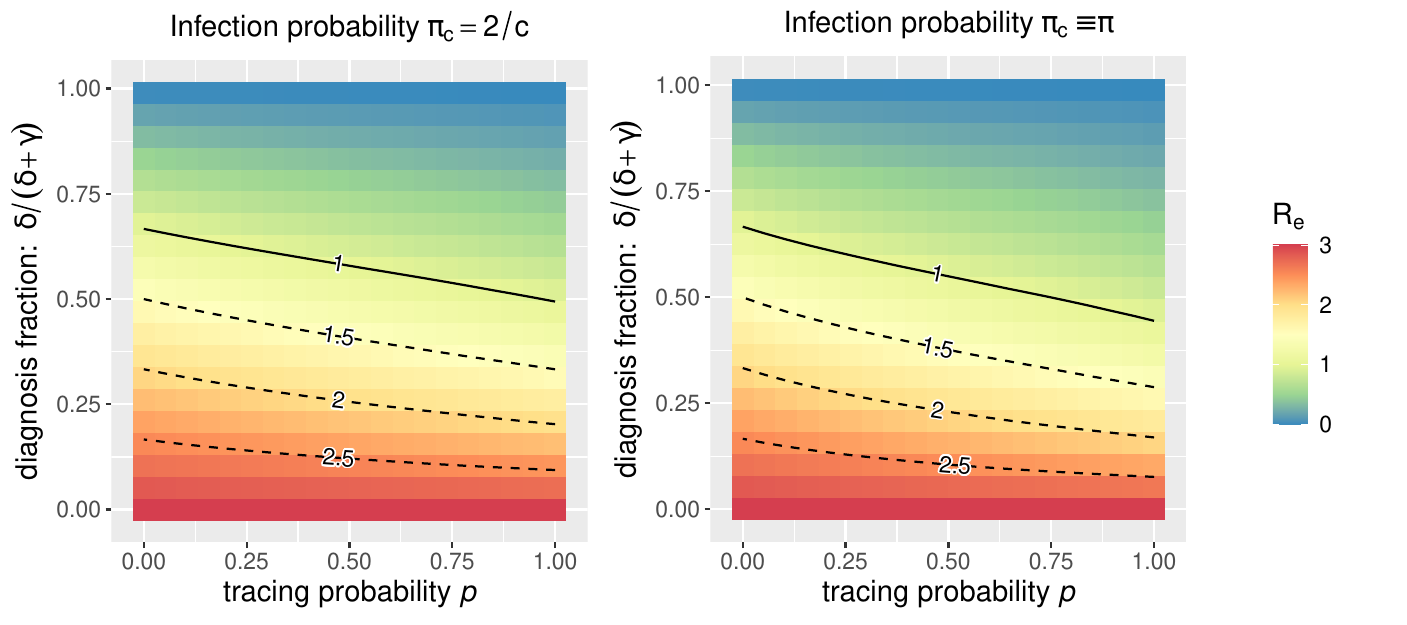}
    \caption{Heatmaps of the reproduction number $R_{e}$ as function of $\delta/(\delta+\gamma)$ in $[0,0.99]$ and $p$ in $[0,1]$. The size of mixing event $C$ follows a \textit{geometric} distribution with mean $\mu_{C}=20$, $\gamma=1/7$ and $\beta= 3/266$. In the left panel: $\pi_{c}= 2/c$ for $c \geq 2$, while $\pi_{c}\equiv 0.05$ on the right.}
    \label{fig:heatmap_R_geo}
\end{figure}

Next we focus on the reduction in the reproduction number, defined as $(R_{0}-R_{e})/R_{0}$. We note that this reduction is independent of rate $\beta$ based on the expressions in Equations \eqref{eq:R0_diagnosis} and \eqref{eq:E_H}. We further investigate how the reduction varies with the mean size of mixing events, $\mu_{C}$, the distributions of $C$ (geometric, logarithmic and fixed size) and the forms of $\pi_{C}$ ($2/c$ and fixed). Figure \ref{fig:plot_reduction_p} illustrates the reduction for $\mu_{C}=5, 10$ and $20$ while keeping $\delta=\gamma=1/7$ fixed. The fixed infection probability $\pi$ is selected for each distribution of $C$ with respective mean $\mu_{C}$ so that it gives the same average number of births in an event as for the corresponding $\pi_{c}= {2}/{c}$ case. The results suggest that when $\pi_{c}= {2}/{c}$, the distribution of $C$ does not significantly affect the reduction. In contrast, when the infection probability is fixed, the reduction becomes more sensitive to the choice of the distribution of $C$. Among the three distributions, logarithmic distribution results in the greatest reduction, followed by geometric distribution, with the fixed size showing the smallest reduction for the same mean size and tracing probability.

An explanation for the above observations is as follows. Recall that the expected number of infections per event, $\mathbb{E}[Y_{0}]$, is controlled to be the same across different distributions of $C$ with the same mean $\mu_{C}$. However, the variance $\mathrm{Var}[Y_{0}]$ depends on the distribution of $C$. A higher variance in $Y_{0}$ means that some events have higher probability of resulting in a relatively large number of infections (``super-spreading events"). These are precisely the events which are best targeted by sideward contract tracing.
%Since sideward tracing targets entire sibling groups, the effectiveness of tracing increases with the frequency of large sibling groups. 
When the infection probability $\pi_{c} \equiv \pi$ is fixed, $\mathrm{Var}[Y_{0}]$ increases with the variance of $C$, which is higher under the logarithmic distribution, lower under the geometric and equal to zero in the deterministic case, explaining the different effects of sideward tracing under the three considered distributions. 
%A logarithmic distribution for $C$ has a higher variance, leading to a more significant reduction in $R_e$. A geometric distribution for $C$ has a smaller variance, resulting in a weaker but still noticeable 
%impact of tracing. If $C$ is fixed (i.e., it has zero variance), sideward tracing becomes the least effective. 
On the other hand, when the infection probability takes the specific form $\pi_{c}=2/c$, a ``balancing" effect is introduced. This reduces the dependence of $\mathrm{Var}[Y_{0}]$ on the distribution of $C$, making the impact of sideward tracing more uniform across different distributions of $C$.

\begin{figure}[htb!]
\centering
\includegraphics[width=\textwidth]{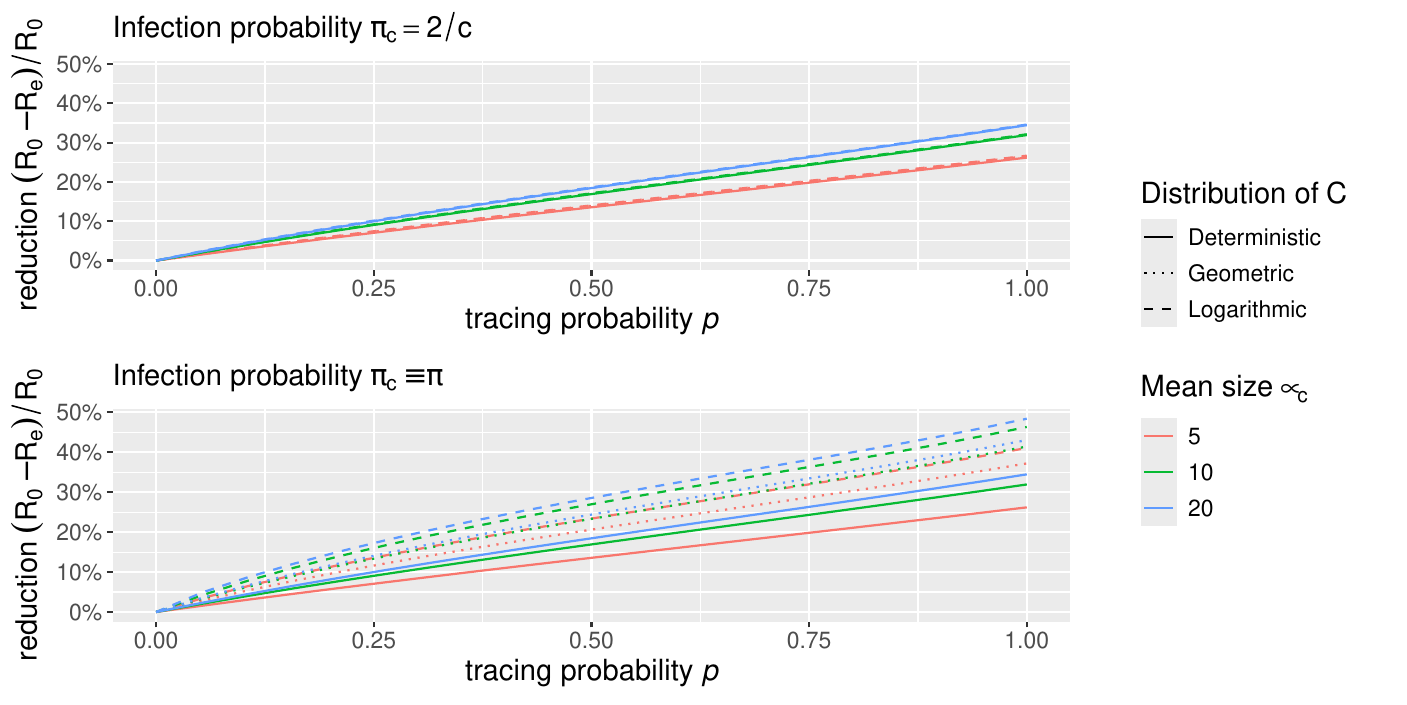}
\caption{Plot of the relative reduction $r$ as a function of the tracing probability $p$ for different mean event size $\mu_{C}=5, 10$ and $20$. We fix $\delta=\gamma=1/7$ and in the upper panel, the infection probability $\pi_{c}= {2}/{c}$ for all $c$; in the lower panel, $\pi_{c} \equiv \pi$ is chosen specifically for each distribution of $C$ to result in the same average number of infections per event as the $\pi_{c}= {2}/{c}$ case.}
\label{fig:plot_reduction_p}
\end{figure}

Crucially, we observe that for all mixing event distributions and for all forms of $\pi_{C}$, a larger $\mu_{c}$ consistently results in a greater reduction corresponding to a certain tracing probability $p$. This is due to sideward tracing being more effective when more individuals get infected during the same event. Nevertheless, this observation should not suggest that larger events are preferable, in fact, imposing limitations on the gathering size plays a bigger role in the reduction of the reproduction number. As shown in Figure \ref{fig:plot_R_same_beta}, if $\beta$ is fixed across all scenarios ($\mu_{C}=5,10$ and $20$), the case of $\mu_{C}=20$ gives a much larger $R_{0}$. In this case, even under perfect tracing ($p=1$), when the reduction is substantial,  the reproduction number $R_{e}$ remains considerably high, well above 1.

\begin{figure}[htb!]
\centering
\includegraphics[width=\textwidth]{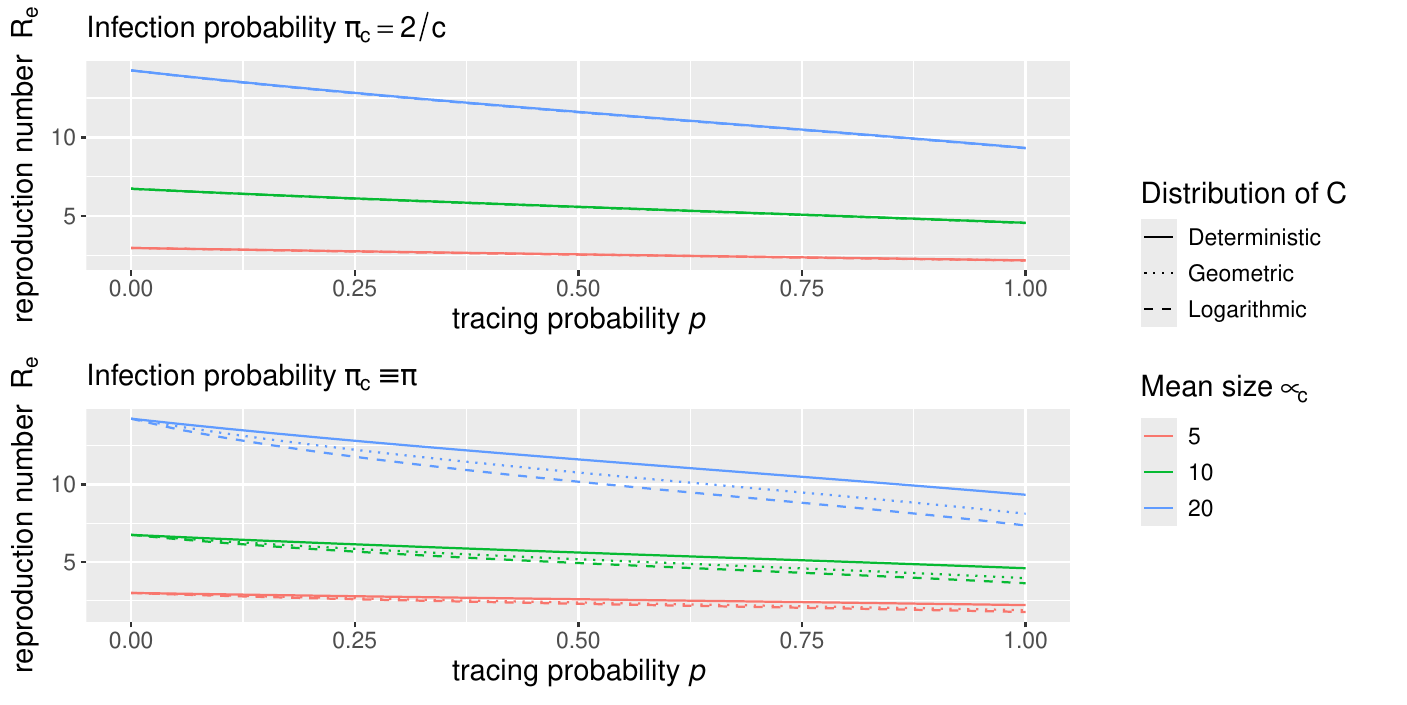}
\caption{Plot of the reproduction number $R_{e}$  as a function of the tracing probability $p$ for different mean event size $\mu_{C}=5, 10$ and $20$. We fix $\delta=\gamma=1/7$ and $\beta= 3/28$ such that $R_{0}=3$ with $\mu_{C}=5$. In the upper panel, the infection probability $\pi_{c}= {2}/{c}$ for all $c$; in the lower panel, $\pi_{c} \equiv \pi$ is chosen specifically for each distribution of $C$ to result in the same average number of infections per event as the $\pi_{c}= {2}/{c}$ case.}
\label{fig:plot_R_same_beta}
\end{figure}

\section{Conclusion and discussion}\label{sec:conclusion}
This paper investigates a novel concept: sideward contact tracing. This tracing approach is incorporated into an epidemic model that includes short-term mixing events, where multiple infections can occur at a single mixing event. In contrast to traditional tracing methods, sideward tracing aims to identify those who were infected at the same event, rather than the infector or/and the infectees of the index case. The early stage of the epidemic with sideward tracing was analysed through a branching process with sibling dependence. In particular, we treated the groups of individuals infected at the same event (group of siblings) as ``macro-individuals"; they behave independently, according to the principles of a branching process. The effective macro reproduction number $R_e$ was derived as the mean number of offspring of the macro branching process. The individual reproduction number $R^{(ind)}_{e}$, related to the original individual branching process, was also obtained. The two reproduction  numbers  have the same expression. We also expressed the probability of a major outbreak in the epidemic in terms of the non-extinction probability  of the macro branching process, which is defined through standard branching process theory.  

Using the obtained theoretical result, we performed a numerical investigation to show variations in the effect of sideward contact tracing depending on various quantities. 
Our numerical results reveal that the fraction of diagnosis has a greater impact on reducing $R_e$ compared to the tracing probability $p$, in absolute terms.
However, it should be noted that these two quantities are intertwined and hardly directly comparable. The effectiveness of tracing is inherently dependent on successful diagnosis, as tracing cannot occur if cases are not diagnosed.
%This observation does not diminish the usefulness of sideward contact tracing.  
Increasing the fraction of diagnosed cases presents significant challenges, particularly in the beginning of an epidemic, as it requires extensive testing efforts and the development of cost-effective tests. Similarly, increasing the fraction of traced contacts is also challenging. As studied by \cite{keeling2020efficacy}, the system can become overwhelmed when large numbers of contacts need to be traced rapidly, especially during periods of high case importation.

Furthermore, our numerical results also show that the impact of sideward tracing is more pronounced, 
resulting in a greater reduction in the reproductive number, when the size of mixing events is larger. This emphasizes the potential of sideward tracing as a control measure in connection with large gatherings. We also highlight the limitations of sideward tracing: 
%it remains crucial to impose limitations on gathering size. 
without restrictions on the gathering size, the reproductive number may remain excessively high, making it unfeasible to bring it below 1 even with perfect tracing.

The theoretical results in this paper also provide technical tools for further analyses. For example, deriving the growth rate of the macro process would allow to characterize the growth rate of the epidemic as well as the generation time distribution in terms of macro individuals, see Appendix \ref{appendix:macro_process} for more details. 
Previous research \cite{favero2022} has shown that some preventive measures (as isolation, mass testing, forward and backward contact tracing) affect the generation time distribution leading to biased estimates of the reproduction number. This suggests that a similar study on the effect of sideward contact tracing on the generation time distribution and resulting biases might yield interesting results. 

In this paper, we restricted our attention to sideward contact tracing in order to  analyse its effect separately from other types of contact tracing. One promising area for further investigation arises from combining sideward with conventional forward and backward tracing procedures which can be employed instead to trace infectors and infectees. It would be interesting to explore how these three different tracing strategies can complement each other in the framework of this paper, similarly to \cite{mancastroppa2022sideward} who investigated the effectiveness of combining tracing mechanisms in the framework of temporal networks.  
 %and found that the combination of backward and sideward tracing shows a non-monotonic dependence on the size of gatherings. 
For instance, in a scenario where an asymptomatic infector transmits the infection to two susceptibles during a mixing event, forward tracing may fail to identify the infectees as the infector remains undiagnosed. In such cases, sideward tracing becomes crucial if one of the infectees is diagnosed, enabling the identification of the remaining infectee. On the other hand, starting from one of the infectees, backward-forward tracing  could potentially identify the asymptomatic infector but then fail to contact trace the other infectee before recovery. Integrating these tracing strategies in our branching-processes framework is analytically challenging, particularly due to the additional dependencies it introduces between the lifespans of siblings and their parents. 

Another possible extension of our tracing model 
%making it more realistic 
is to incorporate delays between the diagnosis of the index case and the notification of other siblings. The impact of such tracing delays has been extensively studied in traditional contact tracing frameworks. For instance, \cite{ball2015stochastic} analysed the impact of delays in forward tracing, while \cite{muller2016effect} investigated the effects of delays in both forward and backward tracing. Furthermore, \cite{kretzschmar2020impact} demonstrated that minimizing tracing delays and optimizing tracing coverage are crucial for the success of a contact tracing strategy.

Additionally, incorporating heterogeneity into the model could provide deeper insights. For example, \cite{kojaku2021effectiveness} highlights the effectiveness of backward contact tracing, particularly in leveraging the heterogeneity of network structures. One interesting extension to our present model can be categorizing individuals based on their levels of social activity (high, normal, or low) could reveal more about the effectiveness of sideward tracing. More socially active individuals are more likely to attend events and consequently more likely to become infected and spread the infection to others. In this case, it would be important to trace the infector as well as the siblings. 

In conclusion, this paper highlights the potential effectiveness of sideward contact tracing. We show that it can play a significant role in controlling epidemics, especially when relaxing the gathering size limitations, and we underline its limitations.

\subsection*{Acknowledgments}
We would like to thank Tom Britton for valuable discussions and helpful comments throughout this work. Additionally, we extend our gratitude to Serik Sagitov for his constructive feedback. Most of the work was conducted while D.Z. was pursuing her PhD at the Department of Mathematics, Stockholm University, 106 91 Stockholm, Sweden.

\subsection*{Funding}
D.Z. acknowledges the Swedish Research Council (grant 2020-04744) for financial support. M.F. acknowledges the Knut and Alice Wallenberg Foundation (Program for Mathematics, grant 2020.072) for financial support. 
\bibliographystyle{plain}
\bibliography{reference}
\newpage
\begin{appendices}
\section{Growth rate of the macro process $\mathcal{M}$}
\label{appendix:macro_process}
The macro-process $\mathcal{M}$ constitutes a technical framework that can be used for further analyses. 
We report here some additional technical information which we deem useful. 

Being $\lambda(t)$ the birth rate of the macro branching process, it is possible to  easily express the corresponding generation time distribution by normalising $\expectation{\lambda(t)}$ so that it integrates to $1$ (dividing by the reproduction number). 
Furthermore, the Malthusian parameter, or growth rate,  $r_{\mathcal{M}}$ can be expressed as the unique solution of 
    \begin{equation*}
    \int_0^\infty
    e^{-r_{\mathcal{M}} t} \expectation{\lambda (t)} dt =1 .
    \end{equation*}
The Malthusian parameter $r_{\mathcal{M}}$ of the macro-process $\mathcal{M}$ not only provides information about the growth of the macro-process $\mathcal{M}$, but also of the single-individuals process $\mathcal{B}_{CT}$. In fact, letting $I_{\mathcal{M}}(s)$  and $I_{\mathcal{B}_{CT}}(s)$ be the number of individuals alive at time $s$ in  $\mathcal{M}$ and $\mathcal{B}_{CT}$ respectively, we can write 
    \begin{equation*}
    I_{\mathcal{B}_{CT}}(s) =
    \sum_{i=1}^{I_{\mathcal{M}}(s)} Y_i(s-\sigma_i(s)),
    \end{equation*}
where $\sigma_i(s)$ is the time of birth of the $i^{th}$ macro-individual alive at time $s$, with $s-\sigma_i(s)$ thus being its age. 
By interpreting the size process $Y$ as a ``characteristic'' of a macro-individual, the expression above  allows us to interpret  $I_{\mathcal{B}_{CT}}(s)$ as the ``total characteristic'' of $\mathcal{M}$ at time $s$, as in \cite{jagers1984}.  
Then, the classical theory of Crump-Mode-Jagers branching processes \cite{jagers1984} implies  in the supercritical case that,  for large times, $I_{\mathcal{M}}(s) \approx e^{r_{\mathcal{M}} s} W$, with $W$ being a positive random variable; and that  $I_{\mathcal{B}_{CT}}(s)\approx e^{r_{\mathcal{M}} s} m_Y W$, with  $m_Y$ being a positive constant. 
\newpage
\section{Additional numerical results}
\label{appendix:numerical}

\begin{figure}[htb!]
\centering
\begin{subfigure}[htb!]{\textwidth}
    \centering
\includegraphics[width=\textwidth]{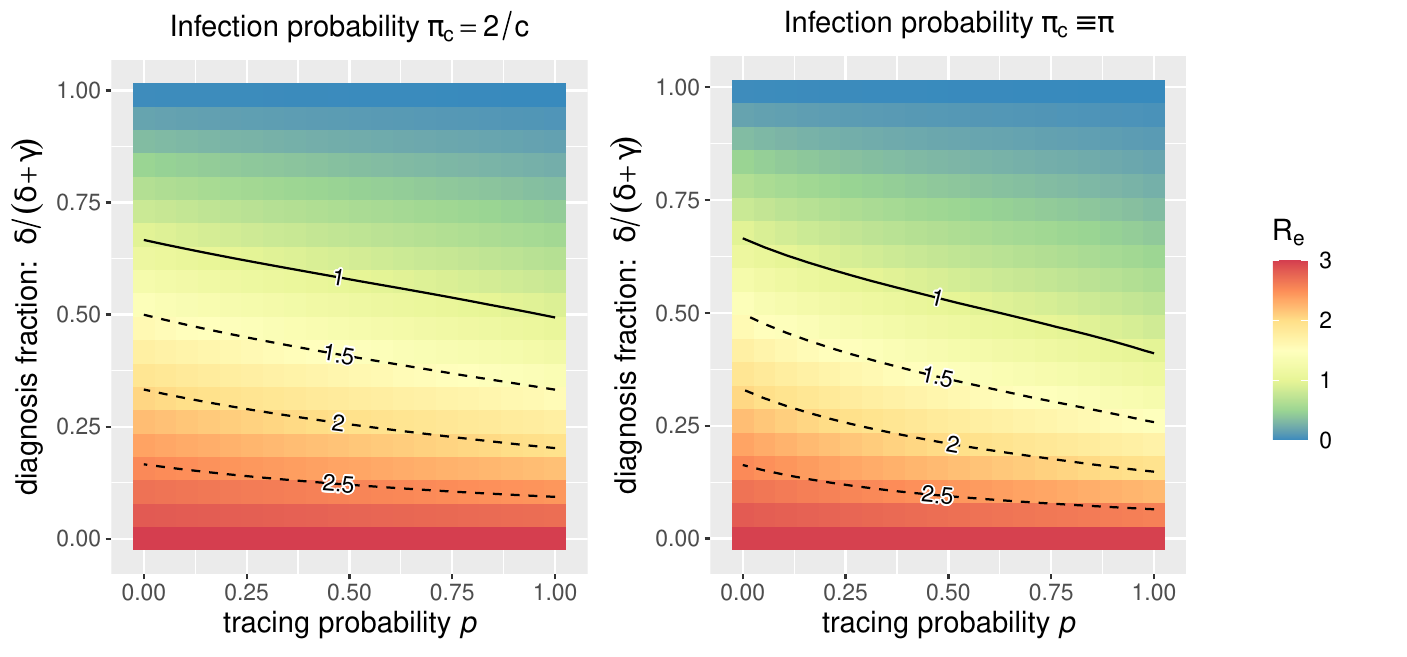}
\caption{}
\label{fig:heatmap_R_log}
\end{subfigure}
\begin{subfigure}[htb!]{\textwidth}
\centering
\includegraphics[width=.5\textwidth]{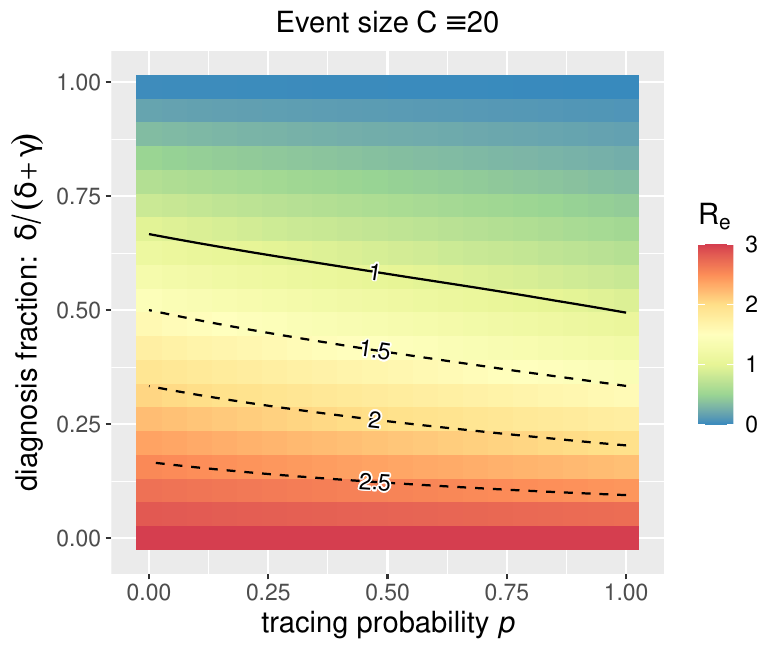}
\caption{}
\label{fig:heatmap_R_fix}
\end{subfigure}
\caption{Heatmaps of the reproduction number $R_{e}$ as function of $\delta/(\delta+\gamma)$ in $[0,0.99]$ and $p$ in $[0,1]$, where $\gamma=1/7$ and $\beta= 3/266$. The size of mixing event $C$ follows a \textit{logarithmic} distribution with mean $\mu_{C}=20$ in \ref{fig:heatmap_R_log}, where in the left panel: $\pi_{c}=2/c$ for $c \geq 2$ and $\pi_{c}\equiv 0.03$ on the right. In \ref{fig:heatmap_R_fix}, the event size is fixed: $C \equiv 20$ and hence $\pi_{c}\equiv \pi = 0.1$.}
\label{fig:heatmap_R_log_fix}
\end{figure}
\end{appendices}

\end{document}